\newtheorem{theorem}{Theorem}
\newtheorem{lemma}[theorem]{Lemma}
\newtheorem{proposition}[theorem]{Proposition}
\newtheorem{definition}{Definition}
\title{Secondary Access to Spectrum with SINR Requirements Through Constraint Transformation}
\author{\IEEEauthorblockN{Brage Ellings\ae ter\IEEEauthorrefmark{1}\IEEEauthorrefmark{2}}
\IEEEauthorblockA{\IEEEauthorrefmark{1}UNIK-University Graduate Center at Kjeller, University of Oslo, Norway}
\IEEEauthorblockA{\IEEEauthorrefmark{2}Norwegian Defence Research Establishment}\texttt{brage@unik.no}}
\begin{document}
\maketitle
\begin{abstract}
In this paper we investigate the problem of allocating spectrum among radio nodes under SINR requirements. This problem is of special interest in dynamic spectrum access networks where topology and spectral resources differ with time and location. The problem is to determine the number of radio nodes that can transmit simultaneously while still achieving their SINR requirements and then decide which channels these nodes should transmit on. Previous work have shown how this can be done for a large spectrum pool where nodes allocate multiple channels from that pool which renders a linear programming approach feasible when the pool is large enough. In this paper we extend their work by considering arbitrary individual pool sizes and allow nodes to only transmit on one channel. Due to the accumulative nature of interference this problem is a non-convex integer problem which is NP-hard. However, we introduce a constraint transformation that transforms the problem to a binary quadratic constraint problem. Although this problem is still NP-hard, well known heuristic algorithms for solving this problem are known in the literature. We implement a heuristic algorithm based on Lagrange relaxation which bounds the solution value of the heuristic to the optimal value of the constraint transformed problem. Simulation results show that this approach provides solutions within an average gap of 10\% of solutions obtained by a genetic algorithm for the original non-convex integer problem.
\end{abstract}


\section{Introduction}
An important aspect of cognitive radio and dynamic spectrum access is how to distribute resources among users. Resource allocation in such systems differ from conventional wireless systems because the resources available changes with time and location, yielding an optimal allocation hard. In dynamic spectrum access scenarios, management of spectrum is usually divided between two approaches \cite{Lehr}: (i) property rights/licensed/exclusive use and (ii) open access/unlicensed/spectrum commons. In the latter all devices are treated equal and sharing of the spectrum is done through protocols and etiquettes. Although these protocols and etiquettes may be designed to achieve some overall system performance, quality of service can not be guaranteed in this scenario as everything will be best effort.

In the licensed scenario, devices can buy and trade access to spectrum. In such a scenario there exists central entities which grant access to spectrum on a legal basis and central control. This approach has led to numerous publications on spectrum trading, including demand \cite{Niyato2008}, pricing \cite{Ileri} and auctions of spectrum \cite{Gandhi2007}. In such a scenario, quality of service can be guaranteed, and in fact must be guaranteed in order for the market to function. A natural problem for spectrum holders is then how to maximize revenue.  

In this paper we assume secondary users can buy access to spectrum for a given time period. When requesting spectrum, the secondary users issue a statement about the necessary quality-of-service for their operation. In this paper we limit our focus to the case where such a QoS metric consists of a signal-to-interference plus noise ratio (SINR). Each user is willing to pay a certain amount for its service. The goal of the spectrum holder is then to maximize the number of users that can be granted access to spectrum, while still guaranteeing the QoS for these users. This becomes an admission control and spectrum allocation problem, which is difficult to solve.


The difficulty of such a problem lies in the nature of the wireless channel: interference. The accumulative nature of interference causes the coupling between decision variables to be non-linear and usually the optimization problem is non-convex (and NP-hard). A central approach is usually assumed to solve such problems close to optimality, and techniques such as genetic algorithms or geometric programming can be used \cite{4275017}. Another approach is to solve the problem by a distributive approach by investigating the Karush-Kuhn-Tucker (KKT) necessary conditions for optimality \cite{wang}. However, finding the KKT multipliers is complex, and in addition the objective function must be continuously differentiable and the problem has to be convex for the KKT conditions to also be sufficient conditions for optimality \cite{Boyd2004}.

Based on a recent work \cite{optimus}, we use a constraint transformation to decouple the joint admission control and spectrum allocation problem into two disjoint problems. We show that the admission control problem must be solved by a central entity in order to guarantee any QoS to the secondary users, while the spectrum allocation problem can be solved in a distributed manner close to, and in some cases to, optimality. This has the advantage of decreasing complexity of the central controller and the option of delegating part of the processing to the users while still having a central control.

The rest of this paper is organized as follows: Section \ref{sec:rel-work} provides an overview of related work, Section \ref{sec:pre} defines the system model and problem statement, Section \ref{sec:prob} presents the constraint transformation used in access control, Section \ref{sec:mkp} describes the the heuristic algorithm presented in this paper, Section \ref{sec:channel-selection} describes the channel selection algorithm, Section \ref{sec:performance} provides performance evaluations and Section \ref{sec:conc} concludes the paper.


\section{Related Work}
\label{sec:rel-work}
Due to the reuse factor used in early mobile phone systems and in many of today's broadcast systems, the spectrum allocation problem (SAP) has been extensively studied in the literature, see for instance \cite{hale} and references therein. However, the SAP problem described for cellular networks with reuse factors differ from ours by the fact that these problems dealt with fixed topology and fixed resource allocations. E.g. each base station was allocated $x$ amount of spectrum, the problem was finding this $x$ for each base station. Also, for cellular networks with fixed topology, solution time is not of the essence as the topology does not change. In more dynamic networks, such as dynamic spectrum access networks, a popular approach to simplifying the problem is to simplify the interference model. One frequently used approach is to approximate the interference model by a graph with pairwise constraints. E.g. node $a$ cannot transmit on the same frequency as node $b$. This is done in for instance \cite{Subramanian} and \cite{4658258}. However, the inefficiency of such graph-based models has also been analyzed quite extensively \cite{Gronkvist:2001:CGI:501416.501453}\cite{Moscibroda:2006p137}.

To model interference and successful reception we use the physical SINR model \cite{4146676}, which accounts for accumulated interference as opposed to graph-based models. In recent years this model has been extensively studied for link scheduling in wireless networks, where approximation guarantees on the optimal solution have been shown to exist \cite{Goussevskaia:2007:CGS:1288107.1288122}\cite{Moscibroda:2007:WCW:1236360.1236362}\cite{Halldorsson:2011:WCO:2133036.2133155}. The best known solutions yield a constant factor $\mathcal{O}(1)$ approximation on optimality. We extend these works by considering multiple channels and individual SINR requirements.

\cite{optimus} considers the case of distributing spectrum among users with cumulative interference and a universal SINR constraint for all users. They assume a common set of channels $K$ available to each user where $|K| \sim 1000$ and the goal is to allocate a fraction of the total number of channels to each user according to some objective function. To overcome the cumulative interference problem they propose a novel linear relaxation of the interference constraints such that the problem becomes a linear program. The linear relaxation of the interference constraints introduced in \cite{optimus} forms the basis of our work. Compared to \cite{optimus}, we extend their work by considering individual channel sets and SINR requirements for each user, and also assume the individual channels sets can be arbitrarily small. Instead of allocating a fraction of a common set of channels to each user, we assume a user can obtain a maximum of one channel. These assumptions changes the problem from a linear program in \cite{optimus} to a binary quadratic constraint (BQC) problem. 

In \cite{Xiang2010} spectrum allocation among femtocells is considered, where the goal is to maximize overall system rate while guaranteeing a minimum SINR at each femtocell user. This may seem as a more generalized approach then the one considered in this paper. However, such a formulation has certain drawbacks: (i) with such a formulation each femtocell user \textit{must} transmit, otherwise this problem is equivalent to the general sum-rate problem which is known to be unfair (i.e. most of the resources are allocated to users with high channel gain). (ii) It is known that in many cases some users must stay silent for a feasible solution to exist. Thus the requirement that each user must transmit on some channel limits the cases where a feasible solution exist. However, from a spectrum holders perspective, if there is demand then a feasible solution exists as one could simply allocate all spectrum to the highest bidder.

The work most closely related to ours is \cite{Hoang2008}. In \cite{Hoang2008} the goal is to allocate spectrum to as many cognitive users as possible while achieving a SINR requirement at each user and satisfying some interference limit at the primary users. It is shown that the general problem formulation is NP-hard and thus a greedy heuristic is proposed. As the problem formulation is still NP-hard in this paper, we also rely on a heuristic algorithm to solve the optimization problem efficiently. However, we consider a case where primary users are only protected through a maximum power constraint at each user, such as in the TV white spaces. By this reduction we can use a constraint transformation such that we are able to utilize heuristics with a much lower complexity bound and by comparing sufficient and necessary conditions we are able to upper bound the optimal solution of our problem in certain cases.

\section{System Model and Problem Formulation}
\label{sec:pre}
\subsection{System Model}
A set of $N$ transmitters wants to allocate spectrum to support transmission to their receivers. A pair consisting of a transmitter and a desired receiver is denoted as a user. These users can be one-hop links in ad-hoc networks, or they can be access points to terminal links. In latter case, if different terminals connected to the same access point uses different frequencies, a user has to be defined for each terminal to support this scenario. However, in general this does not affect the spectrum allocation framework, as this only depends on how interference between these users are defined.

To acquire access to spectrum, the users contacts a spectrum holder through some means of communication. The accepted approach to grant access to secondary spectrum is through a database approach as currently done by the FCC \cite{FCC4}. We assume the system consists of a set of channels $\mathcal{K}$. By providing location information to a spectrum holder database, each user has potential access to a subset $\mathcal{K}_i$ of channels along with a maximum power constraint $P_i$ which guarantees that primary users are not affected.
Thus we have that $\mathcal{K}_i \subseteq \mathcal{K}$ and $K_i = |\mathcal{K}_i|$. Each user can select one channel among the $K_i$ channels, where each channel has the same bandwidth and propagation characteristics. The SINR of user $i$ on channel $k$ is given as
\begin{equation}
SINR_{i}^k = \frac{g_{i,i}P_i}{\sigma^2 + \sum_{j\neq i} a_j^kg_{j,i}P_j}
\end{equation}
where $a_j^k \in \{0,1\}$ is 1 if user $j$ transmits on channel $k$, $g_{j,i}$ is the channel gain between node $j$ and node $i$ and $\sigma^2$ is the noise variance.

Let $S_i = g_{i,i}P_i$ and $I_{j,i} = g_{j,i}P_j$. The SINR$_i^k$ is then given as
\begin{equation}
SINR_{i}^k = \frac{S_i}{\sigma^2 + \sum_{j\neq i} a_j^kI_{j,i}}
\end{equation}
We assume each has a SINR requirement $\beta_i$.

\begin{definition}
Let $\mathbb{A}$ define a spectrum allocation,
\begin{equation}
\mathbb{A} = \{a_i^k\}, 1\leq i\leq N,1\leq k \leq K_i
\end{equation}
where $a_i^k=1$ indicates that user $i$ selects channel $k$, $a_i^k = 0$ otherwise.
\end{definition}

\begin{definition}
With the SINR requirements of each user, a spectrum allocation $\mathbb{A}$ is \textbf{successful} if $a_i^k=1$ implies $SINR_i^k\geq \beta_i$.
\end{definition}

\subsection{Problem Formulation}
As mentioned above each user has a SINR requirement. By contacting spectrum holder database, each user is willing to pay a fee for access to the spectrum, given that its SINR requirement is met. To provide a price function versus time and performance is out-of-scope of this paper, we only assume it is a non-decreasing function of SINR. The goal of the spectrum holder is thus to maximize its revenue by allowing an optimal set of users access to the spectrum under their individual SINR constraints.

Since a user can only use one frequency, we introduce a variable $x_i \in [0,1]$ where
\begin{equation}
x_i = \sum_k a_i^k 
\end{equation}
The problem can then be formally defined as follows
\begin{align}
\max& \sum_i r_ix_i \label{eq:org-prob}\\
\text{s.t.:}& \nonumber \\
&a_i^k(SINR_i^k -\beta_i)\geq 0 \label{eq:constraint-non-linear} \\
&x_i = \sum_k a_i^k \\
&0\leq x_i\leq 1, a_i^k\in\{0,1\} \label{eq:a-0-1}
\end{align}
where $r_i$ is the revenue of the spectrum holder given that user $i$ is granted access to the spectrum. This problem is a non-convex 0,1 problem, as (\ref{eq:constraint-non-linear}) is a non-convex constraint and the values of $a_i^k$ is either 0 or 1. The difficulty of the problem lies in the large unknown non-convex solution space due to the large number of possible variable configurations. For instance if there are 10 users with 5 available channels each, there are $5^{10}$ possible combinations. 

As noted in numerous papers, this problem belong to the class of NP-hard problems. Therefore the time required to solve the problem to optimality increases exponentially with the number of constraints. The approached used in this paper consists of a decoupling of the problem into an access control problem and channel allocation problem. The reason for this approach is twofold: first, through a constraint transformation a sufficient condition for a feasible solution can be stated which significantly reduces the solution space and for which efficient heuristics exist. Secondly, given a set of users which is known to contain a feasible solution, a distributed approach to channel selection can be done which in general perform close to optimal and in certain cases is optimal.

Thus the channel selection part of the problem can be done both centrally and distributively. This has the advantage of being able to offload some of the processing requirements at the central entity to the users in the field. On the other hand, access control cannot be done in a distributed manner while still guaranteeing QoS at the users as "global" information about the users is required. In Section \ref{sec:performance}, we show how limiting knowledge to consist of only certain neighboring users affects performance as would be the case in a distributed implementation. 

\section{Access Control}
\label{sec:prob}
\subsection{Constraint Transformation with Equal Channel Sets}
To overcome the large solution space in the original problem (\ref{eq:org-prob})-(\ref{eq:a-0-1}), we transform constraint (\ref{eq:constraint-non-linear}) to a binary quadratic constraint, which although is not linear, reduces the possible combinations of variables to $2^{N}$ for which an efficient heuristic algorithm can find good solutions. The constraint transformation was introduced in \cite{optimus}, but we have modified it to our problem. Instead of transforming (\ref{eq:constraint-non-linear}) from a non-convex constraint to a linear constraint, which is possible in \cite{optimus} due to the fractional channel set allocation which renders a linear approximation possible, we transform the constraint to a binary quadratic constraint due to our single-tone approach. 


The constraint given in (\ref{eq:constraint-non-linear}) can be approximated by
\begin{equation}
x_i(1 + \sum_{j\neq i} x_j\frac{I^{+}_{j,i}}{I_i^{\max}}) \leq K_i
\label{eq:linearized-c-t}
\end{equation}
where $I_i^{\max} = \frac{S_i}{\beta_i}-\sigma^2$ and $I^{+}_{j,i} = \min(I_i^{\max},I_{j,i})$. The problem then becomes
\begin{align}
&\max \sum_i r_ix_i \label{eq:max-z-t}\\
&\text{s.t.} \nonumber \\
&x_i(1 + \sum_{j\neq i} x_j\frac{I^{+}_{j,i}}{I_i^{\max}}) \leq K_i, x_i\in\{0,1\}  i=1,...,N \label{eq:x-0-1-t}
\end{align}
If $x_i$ equals 1, this means user $i$ can achieve its SINR by selecting some channel. It does not, however, find the channel that should be selected. Let $z'$ be the solution to the optimization problem given in (\ref{eq:max-z-t})-(\ref{eq:x-0-1-t}), i.e. the number of users that can obtain their SINR requirement. We then have the following proposition:
\begin{proposition}
$z'$ is a feasible solution to the original optimization problem (\ref{eq:org-prob})-(\ref{eq:a-0-1}).
\label{prop:1}
\end{proposition}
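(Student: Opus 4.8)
The plan is to show that any optimizer $x$ of the transformed problem (\ref{eq:max-z-t})--(\ref{eq:x-0-1-t}) can be completed to a \emph{successful} allocation $\mathbb{A}$ of the original problem having the same objective value, so that $z'$ is attained at a feasible point. First I would rewrite the per-user success condition in interference-budget form: for a user $i$ transmitting on channel $k$, the requirement $SINR_i^k\ge\beta_i$ is equivalent, after clearing the denominator and substituting $I_i^{\max}=S_i/\beta_i-\sigma^2$, to $\sum_{j\neq i} a_j^k I_{j,i}\le I_i^{\max}$. Thus it suffices to produce, for the set $\mathcal{S}=\{i : x_i=1\}$ selected by $z'$, an assignment of exactly one channel to each $i\in\mathcal{S}$ such that the interference $i$ receives on its chosen channel never exceeds $I_i^{\max}$; the resulting $\{a_i^k\}$ then satisfies (\ref{eq:constraint-non-linear})--(\ref{eq:a-0-1}) and attains $\sum_i r_i x_i=z'$.

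Next I would exploit the transformed constraint (\ref{eq:x-0-1-t}), which for every active $i$ reads $\sum_{j\in\mathcal{S},\,j\neq i} I^{+}_{j,i}\le(K_i-1)\,I_i^{\max}$. I would split the interferers of $i$ into \emph{strong} ones, with $I_{j,i}>I_i^{\max}$ (so $I^{+}_{j,i}=I_i^{\max}$), which cannot share a channel with $i$ at all, and \emph{weak} ones, with $I_{j,i}\le I_i^{\max}$ (so $I^{+}_{j,i}=I_{j,i}$). Because the cap charges every strong interferer a full $I_i^{\max}$, the constraint simultaneously bounds the number of strong interferers of $i$ and the accumulated weak interference it can suffer, both against the budget $(K_i-1)\,I_i^{\max}$ to be spread over the $K_i$ channels. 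The key step is then a counting argument: fixing the other users' channels, at least one of $i$'s $K_i$ channels carries none of its strong interferers, and, since the weak interference summed over these strong-free channels cannot exceed the budget, at least one such channel leaves $i$ with total interference at most $I_i^{\max}$.

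The main obstacle is that this argument only yields a good channel for one user at a time, whereas I need a single \emph{consistent} assignment that is simultaneously good for every user, and interference is directional ($I_{j,i}\neq I_{i,j}$ in general), so the naive union bound over the events ``user $i$ is violated'' is far too weak. To close this gap I would either run a best-response / local-search argument — repeatedly moving any violated user to its lowest-interference channel and proving termination through a suitable symmetrized potential — or invoke the channel-selection procedure of Section \ref{sec:channel-selection} and show it always succeeds whenever (\ref{eq:x-0-1-t}) holds. I expect the hardest part to be reconciling the capped strong interferers with the asymmetry of $I_{j,i}$ while proving that a globally consistent assignment always exists; the cap and the ``$1+$'' term in (\ref{eq:x-0-1-t}) are precisely what force the count of strong interferers together with the residual weak interference to fit within the $K_i$ available channels.
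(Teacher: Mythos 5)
Your central counting step is precisely the paper's own proof. The paper rewrites $SINR_i^k\geq\beta_i$ as the budget condition $\omega_i^k=\sum_{j\neq i}a_j^kI_{j,i}\leq I_i^{\max}$, calls a channel \emph{blocked} when this fails, and splits the blocked set as $\Phi=\Phi_1\cup\Phi_2$: channels blocked single-handedly by strong interferers ($I_{j,i}>I_i^{\max}$, so $I^{+}_{j,i}=I_i^{\max}$), of which there are at most $\sum_{j:\,I_{j,i}>I_i^{\max}}x_j$ since each active user occupies one channel, and channels blocked by accumulated weak interference, of which there are fewer than $\frac{1}{I_i^{\max}}\sum_{j:\,I_{j,i}\leq I_i^{\max}}I_{j,i}x_j$ by exactly your budget argument. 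Summing gives $|\Phi|<\sum_{j\neq i}x_j I^{+}_{j,i}/I_i^{\max}\leq K_i-1$, hence an unblocked channel for each admitted user --- your strong/weak split and pigeonhole are a faithful reorganization of this.

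Where you and the paper part ways is the obstacle you flag in your last paragraph: the paper simply does not take that step. Its proof ends at the per-user statement, implicitly relying on the fact that the bound on $|\Phi|$ is independent of \emph{which} channels the interferers occupy, so each admitted user has a free channel against every configuration of the others; no simultaneous assignment is constructed and no best-response or potential argument is run. Your caution is not misplaced --- with constraints tight at the boundary and fully asymmetric gains one can build directed odd-cycle configurations (e.g., three users, two channels, each strongly interfering the next with zero reverse gain) in which every user passes the per-user test yet no joint assignment satisfies all three --- but be aware that the repair you propose would not close this cleanly either: Proposition \ref{prop:2} guarantees convergence of Algorithm \ref{algo:1} only under reciprocal gains, and only to a local minimum of the potential, not to a state where every admitted user meets its SINR (the simulations in Fig.~\ref{fig:sat-vs-z} indeed show a small residual gap). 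So, measured against the paper, your proof is complete once you reach the pigeonhole conclusion; the consistency step you anticipated as ``the hardest part'' is a genuine subtlety, but one the paper elides rather than resolves.
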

\begin{IEEEproof}
See Appendix \ref{app:proof-prop1}
\end{IEEEproof}

\subsection{Constraint Transformation With Unequal Channel Sets}
Although the solution to (\ref{eq:max-z-t})-(\ref{eq:x-0-1-t}) finds a feasible solution to the original problem, we conjecture that the constraint given in (\ref{eq:linearized-c-t}) is too restrictive in some cases. This is especially the case when the different users have different subsets of channels available to them, which is the case in many secondary scenarios as spectral resources depends can depend on location and device specifications. In the proof of Proposition \ref{prop:1}, a feasible solution is shown to exist by bounding the number of blocked channels, where all users are assumed to be able to block all channels at any given user if they are allowed to transmit. However, if two users don't share any common channels they cannot block each other. Also, if two users only share a subset of channels it is less likely that they will contribute the blocking of channels. Thus we propose a new constraint for this scenario which takes into account the probability of two users blocking each other based on their respective channel sets which is yields a feasible solution in the asymptotic regime almost surely. The new constraint is given by
\begin{equation}
x_i(1 + \sum_{j\neq i} x_j\frac{I^{+}_{j,i}|\mathcal{K}_j\cap \mathcal{K}_i|}{I_i^{\max}K_j}) \leq K_i
\label{eq:linearized-c}
\end{equation}
and thus the new optimization problem becomes
\begin{align}
&\max \sum_i r_ix_i \label{eq:max-z}\\
&\text{s.t.} \nonumber \\
&x_i(1 + \sum_{j\neq i} x_j\frac{I^{+}_{j,i}|\mathcal{K}_j\cap \mathcal{K}_i|}{I_i^{\max}K_j}) \leq K_i, i=1,...,N \\
&x_i\in\{0,1\},  i=1,...,N \label{eq:x-0-1}
\end{align}
\begin{proposition}
Let $z^{*}$ be the solution to the problem given in (\ref{eq:max-z})-(\ref{eq:x-0-1}). $z^{*}$ is a feasible solution to the original optimization problem (\ref{eq:org-prob})-(\ref{eq:a-0-1}) almost surely when $N\rightarrow \infty$.
\label{prop:a}
\end{proposition}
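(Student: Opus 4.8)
The plan is to follow the same blocked-channel counting philosophy that underlies Proposition \ref{prop:1}, but to replace its deterministic worst-case placement by a \emph{randomized} channel assignment, so that the overlap factor $|\mathcal{K}_j\cap\mathcal{K}_i|/K_j$ appearing in (\ref{eq:linearized-c}) acquires a genuine probabilistic meaning. Concretely, let $\mathcal{A}=\{i:x_i^{*}=1\}$ be the active set selected by $z^{*}$; feasibility of $z^{*}$ for (\ref{eq:org-prob})--(\ref{eq:a-0-1}) is equivalent to the existence of a successful allocation $\mathbb{A}$ in which every $i\in\mathcal{A}$ is assigned some channel of $\mathcal{K}_i$ whose aggregate interference does not exceed $I_i^{\max}$ (recall that $SINR_i^k\ge\beta_i$ is equivalent to $\sum_{j\neq i}a_j^kI_{j,i}\le I_i^{\max}$). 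I would establish this by the probabilistic method: draw, independently for each $j\in\mathcal{A}$, a channel $c_j$ uniformly at random from $\mathcal{K}_j$, and show the resulting assignment is successful with probability tending to one, which both proves existence and accounts for the ``almost surely as $N\to\infty$'' phrasing.

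The first and most important step is a deterministic lemma explaining why the capped quantity $I^{+}_{j,i}$, rather than $I_{j,i}$, is the correct currency. Call a channel $k\in\mathcal{K}_i$ \emph{blocked} if $\sum_{j:c_j=k}I_{j,i}>I_i^{\max}$. I claim $k$ blocked implies $\sum_{j:c_j=k}I^{+}_{j,i}\ge I_i^{\max}$: if some interferer on $k$ is strong ($I_{j,i}\ge I_i^{\max}$) it already contributes $I^{+}_{j,i}=I_i^{\max}$, while if every interferer on $k$ is weak then $I^{+}_{j,i}=I_{j,i}$ and the inequality is just the blocking condition. Summing over $k\in\mathcal{K}_i$ bounds the number of blocked channels by $\frac{1}{I_i^{\max}}\sum_{j:c_j\in\mathcal{K}_i}I^{+}_{j,i}$, so a single strong interferer correctly counts as one fully consumed channel and no blocking event is ever undercounted.

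Taking expectations is the second step. Since $\Pr[c_j\in\mathcal{K}_i]=|\mathcal{K}_j\cap\mathcal{K}_i|/K_j$, linearity of expectation gives that the expected number of blocked channels of user $i$ is at most $\sum_{j\neq i}x_j\frac{I^{+}_{j,i}|\mathcal{K}_j\cap\mathcal{K}_i|}{I_i^{\max}K_j}$, which is exactly the interference summation inside constraint (\ref{eq:linearized-c}). For an active user ($x_i=1$) that constraint then reads $1+(\text{expected blocked count})\le K_i$, so the expected number of blocked channels is at most $K_i-1$; in expectation at least one channel of $\mathcal{K}_i$ stays free and user $i$'s requirement is met.

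The final step, and the hard part, is to upgrade this per-user expectation bound to the simultaneous high-probability statement the proposition asserts. The blocked-channel count of user $i$ is a function of the independent draws $\{c_j\}$ that changes by at most a constant when any single $c_j$ is altered, so a bounded-differences (McDiarmid/Azuma) inequality controls its deviation from the mean, after which a union bound over the $|\mathcal{A}|\le N$ active users would give feasibility with probability $\to1$. The obstacle is that the slack guaranteed by (\ref{eq:linearized-c}) is only the additive constant ``$-1$'' in $K_i-1$, whereas the fluctuations of the count accumulate over $\Theta(N)$ interferers; a crude bounded-differences bound with an $O(1)$ gap cannot beat the factor $N$ from the union bound. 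Making the argument rigorous therefore requires exploiting that, in the asymptotic regime, each interferer's normalized contribution $I^{+}_{j,i}|\mathcal{K}_j\cap\mathcal{K}_i|/(I_i^{\max}K_j)$ is individually negligible, so that a Bernstein/Bennett-type bound (or a second-moment self-averaging argument) yields a failure probability $o(1/N)$ per user. Pinning down the precise scaling assumptions on the gains and channel sets under which this self-averaging holds is the delicate technical core of the proof; everything else is bookkeeping built on Proposition \ref{prop:1}.
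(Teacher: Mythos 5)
Your proposal follows essentially the same route as the paper's own proof in Appendix~\ref{app:proof-propA}: the paper re-derives the $\Phi_1$ and $\Phi_2$ bounds from the proof of Proposition~\ref{prop:1} with each interferer's contribution weighted by $\text{Prob}(j,i)$, the probability that user $j$ contributes to blocking on one of user $i$'s channels, identifies this probability with $|\mathcal{K}_j\cap\mathcal{K}_i|/K_j$ (your uniform random draw $c_j\in\mathcal{K}_j$ makes that identification precise, where the paper leaves it implicit), and then asserts that the weighted sum coincides with its mean as $N\rightarrow\infty$ ``due to the central limit theorem.'' Your deterministic capping lemma (a blocked channel forces $\sum_{j:c_j=k}I^{+}_{j,i}\geq I_i^{\max}$) and the linearity-of-expectation step are, if anything, cleaner than the paper's treatment, which writes the random blocked count $|\Phi|$ directly as its expectation. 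The one step you flag as unresolved --- upgrading the per-user expectation bound (at most $K_i-1$ blocked channels in expectation) to a simultaneous high-probability statement over all active users --- is precisely the step the paper does not actually carry out either: its one-line CLT appeal controls at best relative fluctuations, whereas the slack in constraint (\ref{eq:linearized-c}) is only the additive constant $1$, so absolute fluctuations and the union bound over up to $N$ users are exactly the obstruction you describe, and your remark that a Bernstein-type argument would need each normalized contribution $I^{+}_{j,i}|\mathcal{K}_j\cap\mathcal{K}_i|/(I_i^{\max}K_j)$ to be individually negligible is a legitimate critique of the published argument rather than a gap unique to your write-up. In short, your proposal reproduces the paper's proof where the paper is rigorous and correctly isolates the point at which the paper's proof is heuristic; nothing in your approach would need to change to match the level of rigor actually attained in the paper.
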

\begin{IEEEproof}
See Appendix \ref{app:proof-propA}
\end{IEEEproof}

We now have two new 0,1 non-linear problems to compute the maximum revenue that a spectrum leaser can achieve while satisfying the requirements of a successful alloation. Specifically, the problems are modified versions of the multidimensional 0,1 knapsack problem (MKP), with the difference that instead of linear constraints we have binary quadratic constraints. As MKPs are NP-hard \cite{kellerer}, so are our problems. Thus solving the above problems is not easy. By exploiting the structure of our particular problems we use a Lagrange relaxation inspired by the one introduced in \cite{magazine1984319} for the standard MKP. A heuristic based on Lagrange relaxation has desirable properties such as low complexity (running time is $\mathcal{O}(2N^2)$ for our problem) and an upper bound for $z^{*}$ without resorting to an LP relaxation of the problem.


\section{A Heuristic using Lagrange Relaxation}
\label{sec:mkp}
To simplify notation we set $a_{ij} = \frac{I^{+}_{j,i}|\mathcal{K}_j\cap \mathcal{K}_i|}{I_i^{\max}K_j}$ (or $a_{ij} = \frac{I^{+}_{j,i}}{I_i^{\max}}$ if solving (\ref{eq:max-z-t})-(\ref{eq:x-0-1-t})). A Lagrange relaxation of our problem is
\begin{align}
\max_{\mathbf{x}}&\bigl\{\sum_{i=1}^N r_ix_i + \sum_{i=1}^N \lambda_i (K_i-x_i(1+\sum_{j\neq i}^N a_{ij}x_j))\bigr\}\label{eq:rl-prob} \\
\text{s.t.:}&\nonumber \\
&\mathbf{x}\in\{0,1\}^N \text{ and } \mathbf{\lambda}\geq 0. \label{eq:rl-x}
\end{align}
Let $\mathbf{x'}$ be a solution to (\ref{eq:max-z})-(\ref{eq:x-0-1}) (or (\ref{eq:max-z-t})-(\ref{eq:x-0-1-t})) with $z(\mathbf{x'})$ as its value. Let RL$(\mathbf{x'})$ be the value of the Lagrange relaxation problem. Clearly RL$(\mathbf{x'})\geq z(\mathbf{x'})$, since when $\mathbf{x'}$ satisfies (\ref{eq:linearized-c}), $K_i-x_i(1+\sum_{j\neq i}^n a_{ij}x_j)\geq 0$. Therefore the optimal value of (\ref{eq:rl-prob})-(\ref{eq:rl-x}) is larger than than $z^{*}$ and it is therefore a relaxation.

It is easy to see that this maximization problem is equivalent to the following
\begin{align}
\max_{\mathbf{x}}&\bigl\{\sum_{i=1}^N \bigl(r_i-\lambda_i(1+\sum_{j\neq i}^N a_{ij}x_j)\bigr)x_i\bigr\} \label{eq:mod-langrange}\\
\text{s.t.:}&\nonumber \\
&\mathbf{x}\in\{0,1\}^N \label{eq:x-0-1-2}
\end{align}
since $\lambda_iK_i$ are constants. It is also easy to see that (\ref{eq:mod-langrange})-(\ref{eq:x-0-1-2}) has the trivial solution
\begin{equation}
x^{*}_i = \left\lbrace\begin{array}{c c}{1} & {\text{if }(r_i-\lambda_i(1+\sum_{j\neq i}^N a_{ij}x_j)\bigr)>0} \\ {0} & {\text{otherwise}}\end{array}\right.
\label{eq:determine-x}
\end{equation}
The difficulty lies in finding values for the Lagrange multipliers such that the solution $\mathbf{x}$ is feasible for the original problem and for which the inequality
\begin{equation}
\sum_{i=1}^N \lambda_i (K_i-x_i(1+\sum_{j\neq i}^N a_{ij}x_j)) \geq 0
\end{equation}
is as close to equality as possible. When equality holds the solution $\mathbf{x}$ is the optimal solution to the BQC problem \cite{everett}. 


\subsection{An Upper Bound}
Before we start with the process of finding the Lagrange multipliers, we provide a result on an upper bound for the optimal value of the original MKP.
\begin{proposition}
Let $\mathbf{x}$ be the solution  and $\mathbf{\lambda}$ the multiplier values of a Lagrange relaxation. Then an upper bound for the optimal value of the BQC problem is given by
\begin{equation}
z^{*} \leq z(\mathbf{x}) + \sum_{i=1}^N \lambda_i (K_i-x_i(1+\sum_{j\neq i}^N a_{ij}x_j)) 
\label{eq:upper-bound}
\end{equation}
\end{proposition}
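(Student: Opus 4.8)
The plan is to prove this as a statement of weak Lagrangian duality, in the spirit of Everett's method \cite{everett}. The first observation I would make is that the right-hand side of (\ref{eq:upper-bound}) is nothing but the Lagrangian objective (\ref{eq:rl-prob}) evaluated at the relaxation's optimizer $\mathbf{x}$. Writing $L(\mathbf{y},\mathbf{\lambda}) = \sum_i r_i y_i + \sum_i \lambda_i(K_i - y_i(1+\sum_{j\neq i} a_{ij} y_j))$, the claim reduces to $z^{*} \leq L(\mathbf{x},\mathbf{\lambda})$. Since $\mathbf{x}$ is the solution of the relaxation, $L(\mathbf{x},\mathbf{\lambda}) = \max_{\mathbf{y}\in\{0,1\}^N} L(\mathbf{y},\mathbf{\lambda})$, so it suffices to bound $z^{*}$ by this maximum.

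I would then pick an optimal solution $\mathbf{x}^{*}$ of the BQC problem (\ref{eq:max-z})--(\ref{eq:x-0-1}), so that $z^{*} = \sum_i r_i x_i^{*}$. Because $\mathbf{x}^{*}$ is feasible, each constraint (\ref{eq:linearized-c}) gives $K_i - x_i^{*}(1+\sum_{j\neq i} a_{ij} x_j^{*}) \geq 0$, and with $\lambda_i \geq 0$ the whole penalty sum evaluated at $\mathbf{x}^{*}$ is nonnegative. Hence $z^{*} = \sum_i r_i x_i^{*} \leq L(\mathbf{x}^{*},\mathbf{\lambda})$. This is precisely the argument already used in the excerpt to establish RL$(\mathbf{x'})\geq z(\mathbf{x'})$, only now applied at the optimal point rather than at an arbitrary feasible one.

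The final step chains the two inequalities. Since $\mathbf{x}$ maximizes $L(\cdot,\mathbf{\lambda})$ over $\{0,1\}^N$, we have $L(\mathbf{x}^{*},\mathbf{\lambda}) \leq L(\mathbf{x},\mathbf{\lambda})$, and combining with the previous bound yields $z^{*} \leq L(\mathbf{x}^{*},\mathbf{\lambda}) \leq L(\mathbf{x},\mathbf{\lambda})$, which is exactly (\ref{eq:upper-bound}). No nontrivial computation is needed; the substance is entirely in correctly identifying the right-hand side as a Lagrangian value and invoking feasibility plus nonnegativity of the multipliers.

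The step I expect to be the real obstacle is the last inequality $L(\mathbf{x}^{*},\mathbf{\lambda}) \leq L(\mathbf{x},\mathbf{\lambda})$, because it silently requires $\mathbf{x}$ to be the \emph{global} maximizer of the Lagrangian, not merely the point produced by the coordinatewise rule (\ref{eq:determine-x}). After collecting the constant $\sum_i \lambda_i K_i$, the Lagrangian is a quadratic pseudo-Boolean function whose cross terms $-\lambda_i a_{ij} x_i x_j$ all carry nonpositive coefficients (as $\lambda_i, a_{ij}\geq 0$), so it is non-separable and (\ref{eq:determine-x}) is only a fixed-point/coordinatewise optimality condition that need not locate the global optimum. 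To make the bound fully rigorous I would either argue that the problem structure forces the coordinatewise optimum to be global, or state the proposition for the exact Lagrangian maximizer and treat the heuristic's $\mathbf{x}$ as an approximation of it; every other inequality above is immediate.
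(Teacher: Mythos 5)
Your weak-duality argument is correct and is essentially the proof the paper defers to in \cite[Thm 4.1]{magazine1984319}: feasibility of an optimal $\mathbf{x}^{*}$ plus $\mathbf{\lambda}\geq 0$ gives $z^{*}\leq L(\mathbf{x}^{*},\mathbf{\lambda})\leq \max_{\mathbf{y}\in\{0,1\}^N}L(\mathbf{y},\mathbf{\lambda})=L(\mathbf{x},\mathbf{\lambda})$, which is (\ref{eq:upper-bound}). Your closing caveat is also well taken and worth flagging: the cited theorem is for the linear MKP, whose Lagrangian separates coordinatewise so the greedy rule does locate the global maximizer, whereas here the cross terms $-\lambda_i a_{ij}x_ix_j$ destroy separability, so the bound is rigorous only under the reading that $\mathbf{x}$ is the exact maximizer of (\ref{eq:rl-prob})--(\ref{eq:rl-x}) --- which is how the proposition is stated --- and not automatically for a fixed point of rule (\ref{eq:determine-x}).
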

Proof of this proposition can be found in \cite[Thm 4.1]{magazine1984319}.

\subsection{Finding the Lagrange Multipliers}
Clearly the quality of the solution depends on the Lagrange multipliers, as a low value of the summation in (\ref{eq:upper-bound}) provides a small gap to the optimal value. Finding Lagrange multipliers for integer programming have been investigated in e.g. \cite{everett} and \cite{SenjuandToyoda}, and specifically for the MKP in \cite{magazine1984319}. We adopt the algorithm proposed in \cite{magazine1984319}, by changing it to our modified version of the MKP.


The algorithm consists of the following steps:\newline
\textit{Step 0 (intialize and normalize)}\newline
Let $\lambda_i = 0$ and $x_i = 1$, $i=1,...,N$. \newline
Normalize the coefficients:\begin{align}a_{ij}=& a_{ij}/K_j \hspace{0.2cm}\text{for }j=1,...,N \text{ and }i=1,...,N \nonumber\\
a_{ii} =& 1/K_i \hspace{0.2cm}\text{for }i=1,...,N\nonumber \\
b_i =& 1 \hspace{0.2cm}\text{for }i=1,...,N\nonumber
\end{align}
Compute $y_i = \sum_{j = 1}^N a_{ij}$ for $i=1,...,N$. \newline \newline
\textit{Step 1 (determine the most violated constraint)}\newline
If $y_i\leq 1$ for all $i$, then $\mathbf{x}$ is feasible and we can stop. Else find the most violated constraint:
\begin{align}
i^{*} &= \arg\max_{i} y_i \nonumber
\end{align}
\textit{Step 2 (compute the increase of the multipliers $\lambda_{j^1},...,\lambda_{j^l}$)}
Compute
\begin{equation}
j^{*} = \arg\max_j \frac{a_{i^{*}j}}{c_j}x_j \nonumber
\end{equation}
\textit{Step 3 (increase the $\lambda$s)}\newline
Set $\lambda_{j^{*}} = \lambda_{j^{*}}+a_{i^{*}j^{*}}$\newline
Set $x_{j^{*}} = 0$, $y_{j^{*}} = 0$ and $y_i = y_i-a_{ij^{*}}$.\newline
If $y_i\leq 1$ for all $i$, then go to step 4; otherwise go to step 1.\newline 

The idea of the algorithm is as follows: find the most violated constraint and find the variable that contributes the most to violating this constraint as a fraction of the revenue it would bring the spectrum holder. Set this variable to zero along with its constraint, and reduce all other constraints by the contribution this variable had on these constraints.

\begin{algorithm}[t]
\caption{Channel Selection}
\label{algo:1}
\begin{algorithmic}[1]
\STATE Consider any $\mathbf{x}$ that is a solution to (\ref{eq:max-z})-(\ref{eq:x-0-1}).
\STATE For each user $i$ with $x_i = 1$ select a random channel from $K_i$.
\FOR{\textbf{ each} user i with $x_i = 1$}
\STATE Compute $\omega_i^k = \sum_{j\neq i} a_j^k I_{j,i}$.
\STATE Select $k^{*} = \arg\min_k \omega_i^k$
\ENDFOR
\STATE Repeat 3-6 until no more adjustments can be performed.
\end{algorithmic}
\end{algorithm}

\section{Channel Selection}
\label{sec:channel-selection}
Through the heuristic algorithm presented in the previous two sections we have found the users which are allowed to transmit in order to achieve a feasible solution of users that can transmit simultaneously while achieving their SINR requirements. However, the solution to this problem does not actually find the channels to be used by each user with $x_i = 1$. In a single-channel network where power control is used to achieve SINR targets at different users, if a feasible solution exists a simple greedy distributed algorithm converges \cite{Foschini1993}. Unfortunately, such a result does not exist for a system consisting of multiple channels.

In the general case this is not a trivial problem. However, by assuming that the channel gains between users are reciprocal (such that $g_{i,j} = g_{j,i}$), a simple greedy algorithm is guaranteed to converge. Such an algorithm is given in Algorithm \ref{algo:1}. 

\begin{proposition}
Algorithm \ref{algo:1} will converge.
\label{prop:2}
\end{proposition}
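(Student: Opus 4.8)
The plan is to exhibit a nonnegative potential function that strictly decreases with every channel reassignment the algorithm makes, and then use finiteness of the configuration space to rule out cycling. First I would fix the set of active users $\{i : x_i = 1\}$, which is determined by the input allocation $\mathbf{x}$ and never changes in Algorithm \ref{algo:1}; each such user occupies exactly one of its $K_i$ channels, so the number of reachable configurations is finite (at most $\prod_{i:x_i=1} K_i$). It therefore suffices to show that whenever the loop in lines 3--6 relocates a user to a strictly lower-interference channel, a real quantity that is bounded below strictly decreases. The algorithm can then revisit no configuration, and the repeat in line 7 must halt after finitely many adjustments.

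The candidate potential is the total weighted pairwise co-channel interference
\[
\Phi = \sum_{i<j}[c(i)=c(j)]\,g_{i,j}P_iP_j,
\]
where $c(i)$ denotes the channel currently assigned to user $i$ and $[\cdot]$ is the indicator of the bracketed event. The key step is to compute the change in $\Phi$ when user $i$ switches from its current channel $k_0$ to $k^{*}=\arg\min_k \omega_i^k$. Collecting only the terms of $\Phi$ that involve user $i$, this change is
\[
\Delta\Phi = P_i\Bigl(\sum_{j\neq i:\,c(j)=k^{*}} g_{i,j}P_j - \sum_{j\neq i:\,c(j)=k_0} g_{i,j}P_j\Bigr).
\]
Here I would invoke the reciprocity assumption $g_{i,j}=g_{j,i}$ to rewrite $g_{i,j}P_j = g_{j,i}P_j = I_{j,i}$, so that each sum is exactly a received-interference term and $\Delta\Phi = P_i(\omega_i^{k^{*}}-\omega_i^{k_0})$. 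Since $P_i>0$ and line 5 picks $k^{*}$ to minimize $\omega_i^k$, every genuine adjustment has $\omega_i^{k^{*}}<\omega_i^{k_0}$ and hence $\Delta\Phi<0$.

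To finish, I would note that each adjustment strictly decreases $\Phi$, which is bounded below by $0$ and attains only finitely many values since the configuration space is finite; a strictly decreasing sequence drawn from a finite set is itself finite, so only finitely many adjustments occur and the algorithm reaches a state in which no user can strictly lower its received interference, at which point line 7 terminates. The main obstacle is the second paragraph: the naive choice $\sum_i \omega_i^{c(i)}$ is not a valid potential, because moving user $i$ also changes the interference seen by every co-channel user $j$ through the cross term $I_{i,j}=g_{i,j}P_i$, an externality that the greedy rule in line 5 ignores. Reciprocity is precisely what makes this externality proportional (by the positive factor $P_i$) to user $i$'s own objective, turning the greedy dynamics into an exact potential game; verifying this alignment carefully, and counting each unordered pair exactly once so that the cross terms are not double-counted, is where the real work lies.
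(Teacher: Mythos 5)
Your proof is correct and follows essentially the same route as the paper: the paper also defines a potential from co-channel interference terms weighted by $P_iP_j$ (its $P(\mathbb{A}) = \sum_i \sum_k u_i^k$, which is exactly twice your unordered-pair sum $\Phi$), uses reciprocity $g_{i,j}=g_{j,i}$ to show each greedy channel switch strictly decreases it, and concludes convergence, citing the finite-improvement-path property of (generalized ordinal) potential games where you instead argue finiteness directly. Your closing observation---that the $P_i$-weighting from reciprocity is what aligns the externality on co-channel users with user $i$'s own objective---is precisely the content of the paper's $\Delta u_l$, $\Delta u_m$ bookkeeping, so the two arguments coincide in substance.
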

\begin{IEEEproof}
See Appendix \ref{app:proof-prop2}
\end{IEEEproof}

An example of a scenario where such an assumption can hold is wireless APs such as Wi-Fi networks. In such a network, the APs have about the same coverage area and thus the channel gain between AP $i$ and AP $j$ is approximately the same as between AP $j$ and AP $i$.

\section{Performance Evaluation}
\label{sec:performance}
Through this paper we have presented a heuristic approach to solving the simultaneous SINR problem. This process consisted of two main steps: simplifying the constraints in the original problem to binary quadratic constraints (BQCs) and using a heuristic algorithm for solving the BQCs. We are thus interested in evaluating how the transformation from the original non-convex constraints to BQCs affects the optimal value as well as how the heuristic performs compared to the optimal value of the BQCs. Also, Proposition \ref{prop:1} says that the solution obtained from the maximization problem (\ref{eq:max-z})-(\ref{eq:x-0-1}) is a feasible solution almost surely for $N\rightarrow \infty$. Thus, for finite values of $N$ we are interested in how many of the users with $x_i = 1$ actually achieve their SINR requirements when Algorithm \ref{algo:1} terminates. These issues are investigated in this section as well as the impact of a distributed implementation of the access control algorithm.

\begin{table}[t]
\centering
\caption{Set of SINR targets and their revenue value for the two sub problems, (i) maximizing number of satisfied users and (ii) maximizing revenue}
\begin{tabular}{c|c|c|c|c|c}
\hline 
SINR Targets & 0 dB & 3 dB & 6 dB & 9 dB & 12 dB \\ \hline \hline
$c$ Value (Max Sat Users) & 1 & 1 & 1 & 1 & 1 \\ \hline
$c$ Value (Max Revenue) & 1 & 2 & 3 & 4 & 5 \\ \hline \hline
\end{tabular}
\label{tab:sinr-targets}
\end{table}

\subsection{Analytical Analysis of Geometric Signal Propagation}
We start by a theoretical analysis of the transformation from the problem in (\ref{eq:org-prob})-(\ref{eq:a-0-1}) to the one given in (\ref{eq:max-z-t})-(\ref{eq:x-0-1-t}). The constraint given in (\ref{eq:linearized-c-t}) is a sufficient condition for a feasible solution. If channel gain can be modeled as proportional to distance so that the triangular inequalities hold\footnote{let $x,y,z$ be points in a Euclidean space, then $d(x,y)<d(x,z)+d(y,z)$ hold.}, and all users have the same SINR requirement, we can through a necessary condition bound the optimal solution as follows.
\begin{proposition}
Assume the optimal solution to (\ref{eq:org-prob})-(\ref{eq:a-0-1}) can achieve $OPT$. Then the optimal solution to (\ref{eq:max-z-t})-(\ref{eq:x-0-1-t}), $OPT'$, can achieve
\begin{equation}
\frac{OPT}{\min\{2^{\alpha}-1,10\}}-1\leq OPT'
\end{equation}
where $\alpha$ is the pathloss exponent.
\label{prop:G}
\end{proposition}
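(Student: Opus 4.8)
The plan is to read (\ref{eq:linearized-c-t}) as a \emph{sufficient} admission condition and to pair it with a matching \emph{necessary} condition that every successful allocation of the original problem must obey; the two will be shown to differ by at most the factor $C:=\min\{2^{\alpha}-1,10\}$. Since all users share the target $\beta$, I treat the cardinality objective ($r_i=1$, the ``maximize number of satisfied users'' row of Table~\ref{tab:sinr-targets}), so that $OPT$ and $OPT'$ simply count admitted users. Write $a^{+}_{j,i}=I^{+}_{j,i}/I_i^{\max}$ (the coefficient $a_{ij}$ of Section~\ref{sec:mkp}). Fix an optimal successful allocation $S$ of (\ref{eq:org-prob})--(\ref{eq:a-0-1}) with $|S|=OPT$, together with its channel assignment $S=S_1\cup\cdots\cup S_K$, where $S_c$ is the set of users placed on channel $c$. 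Recall that user $i$ meets its target on channel $c$ exactly when $\sum_{j\in S_c,\,j\neq i}I_{j,i}\le I_i^{\max}$, so every $S_c$ satisfies $\sum_{j\in S_c,\,j\neq i}a^{+}_{j,i}\le 1$ for each $i\in S_c$; this is the per-channel feasibility I will exploit.

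The engine of the argument is a geometric estimate on the aggregate interference that a single feasible channel-set delivers to one link. I would prove: for any SINR-feasible set $T$ (threshold $\beta$) and any link $i$, $\sum_{j\in T}a^{+}_{j,i}\le C$. With $g_{j,i}$ determined by a path-loss of exponent $\alpha$ over Euclidean distance, the triangle-inequality hypothesis yields the standard separation property of feasible sets---two simultaneously successful transmitters are bounded apart relative to their intended link lengths. Grouping the interferers of $T$ into annuli of geometrically growing radius about the receiver of $i$, the capped contribution $a^{+}_{j,i}=\min\{1,\beta\,(d_{i,i}/d_{j,i})^{\alpha}\}$ decays by a factor $2^{-\alpha}$ per scale while separation caps the occupancy of each scale. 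Summing the resulting geometric series, whose ratio is $2^{-\alpha}$ and whose tail is $\sum_{k\ge 1}2^{-k\alpha}=1/(2^{\alpha}-1)$, produces the bound $2^{\alpha}-1$; a coarser uniform packing estimate supplies the absolute constant $10$, and taking the smaller of the two is exactly $C$.

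Applied channel by channel to a fixed $i\in S$, the home channel contributes at most $1$ and each of the other $K-1$ channels at most $C$, so the necessary condition $\sum_{j\in S,\,j\neq i}a^{+}_{j,i}\le 1+(K-1)C$ holds for every admitted user. This is the left-hand side of (\ref{eq:linearized-c-t}) inflated by roughly $C$: the sufficient condition only permits the sum to reach $K-1$. It then remains to extract from $S$ a subset $S'$ that genuinely satisfies (\ref{eq:linearized-c-t}). I would do this greedily, admitting users in some order while no constraint (\ref{eq:linearized-c-t}) is violated, and charge each rejected user against the affectance it would have added; comparing the total available budget $|S'|(K-1)$ with the total affectance mass $\le OPT\,(1+(K-1)C)$ yields $|S'|\ge OPT/C-1$, the additive $-1$ absorbing the self-term ``$1+$'' of the constraint. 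Because $S'$ is feasible for (\ref{eq:max-z-t})--(\ref{eq:x-0-1-t}) we conclude $OPT'\ge|S'|\ge OPT/C-1$.

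The main obstacle is the geometric estimate of the second paragraph: turning the triangle-inequality separation into a clean per-scale occupancy bound, handling the cap in $I^{+}_{j,i}$ (and any residual dependence on $\beta$ and on unequal link lengths), and isolating the two regimes that yield $2^{\alpha}-1$ and $10$ respectively. Once that constant is pinned down, the channel-by-channel aggregation and the greedy extraction with its $-1$ correction are routine.
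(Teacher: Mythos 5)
Your overall architecture matches the paper's: pair the sufficient condition (\ref{eq:linearized-c-t}) against a necessary condition of the same quadratic form inflated by the constant $C$, then convert the gap between the two conditions into the bound $OPT/C-1\le OPT'$. Your per-channel aggregation (home channel contributes at most $1$ by feasibility, each foreign channel at most $C$ by a separation/packing estimate) is essentially the content of the paper's necessary-condition lemma, inequality (\ref{eq:nec}) — which the paper itself does not prove but imports from Lemma~3 of \cite{optimus}. Be aware, though, that the constant bookkeeping is shakier than you suggest: your geometric tail $\sum_{k\ge 1}2^{-k\alpha}=1/(2^{\alpha}-1)$ would yield a bound \emph{proportional to} $1/(2^{\alpha}-1)$, not $2^{\alpha}-1$, and the paper is internally inconsistent here as well (the lemma states $\min(2^{\alpha}+1,10)$ while the proposition states $\min\{2^{\alpha}-1,10\}$).

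The genuine gap is your final greedy-extraction step; the charging argument does not close. First, affectance is directed: the necessary condition bounds only the \emph{incoming} sums $\sum_{j}a^{+}_{j,i}$ at each user, and nothing bounds a single user's outgoing affectance. If a candidate $u$ is rejected because its own constraint would fail, you learn $\sum_{j\in S'}a^{+}_{j,u}>K-1$; summing this over the set $R$ of rejected users and comparing with the necessary bound $\sum_{j\in S}a^{+}_{j,u}\le 1+(K-1)C$ at those same users gives $|R|(K-1)<|R|\bigl(1+(K-1)C\bigr)$, which holds vacuously for every $|R|$ and yields no lower bound on $|S'|$. Second, if $u$ is rejected because it would push some admitted $i\in S'$ over budget, no budget is consumed — $u$'s affectance is never actually added — so one admitted user sitting just below $K-1$ can block arbitrarily many candidates while absorbing zero charge; your comparison of the budget $|S'|(K-1)$ against the total mass $OPT\bigl(1+(K-1)C\bigr)$ never accounts for these rejections. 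The paper avoids extraction entirely: it defines $\mathcal{L}_s$ and $\mathcal{L}_n$ as the sets satisfying the sufficient and necessary constraints, sets $\epsilon_i=\min_{j\in\mathcal{L}_n}a_{j,i}$, derives $\epsilon_i\le CK_i/|\mathcal{L}_n|$ from $|\mathcal{L}_n|\epsilon_i\le CK_i$, lower-bounds the necessary sum by the sufficient sum plus $(|\mathcal{L}_n\setminus\mathcal{L}_s|-1)\epsilon_i$, and sums over $i$ in an extremal argument to reach $|\mathcal{L}_n|\le C(|\mathcal{L}_s|+1)$, i.e.\ $OPT\le C(OPT'+1)$ directly. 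To repair your route you would need either reciprocal gains making $a^{+}$ symmetric, or a partition-style lemma splitting a set with in-affectance at most $CK$ into $O(C)$ classes each feasible for (\ref{eq:max-z-t})--(\ref{eq:x-0-1-t}); either way you lose constants beyond the clean $OPT/C-1$.
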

\begin{proof}See Appendix \ref{app:proof-propG}\end{proof}
Thus, under these conditions we can find a constant factor approximation of the optimal solution. However, if received signal power don't satisfy the triangular inequalities it has been shown that it is NP-hard to approximate the optimal solution of a scheduling with one channel to within $N^{1-\eta}$ for $\eta>0$ \cite[Theorem 6.1]{5062108}. As adding channels increases the decision complexity, this also holds for our problem.

\begin{figure}[t]
\centering
\includegraphics[width = 0.8\columnwidth]{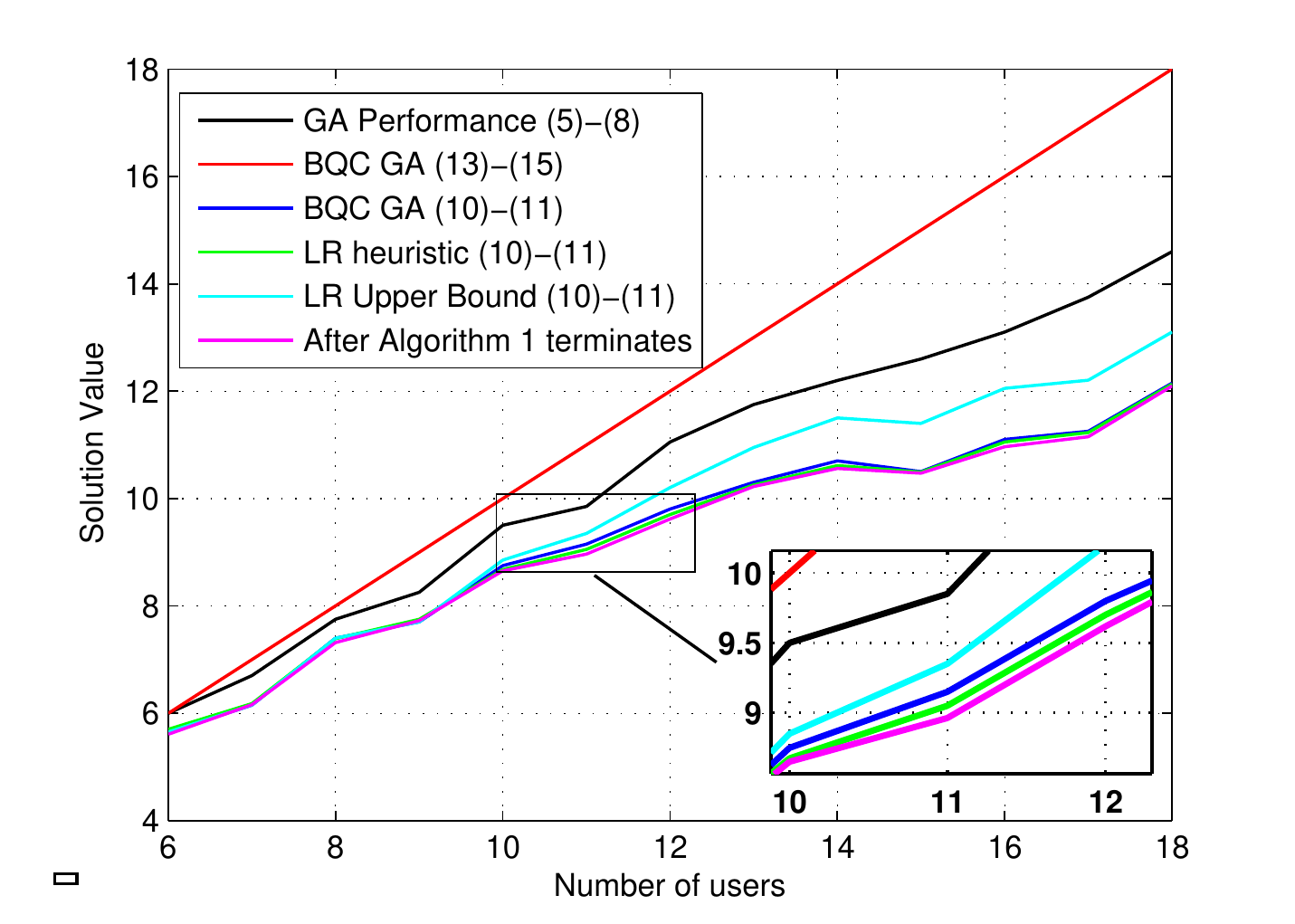}
\caption{Solution of the original problem (GA) compared to the GA solution value of the problem given in (13)-(15) as well as the solution value of the BQC problem (10)-(11) (GA, heuristic, upper bound and channel allocation) for the max sat problem.
}
\label{fig:opt-value-nona}
\end{figure}

\begin{figure}[t]
\centering
\includegraphics[width = 0.8\columnwidth]{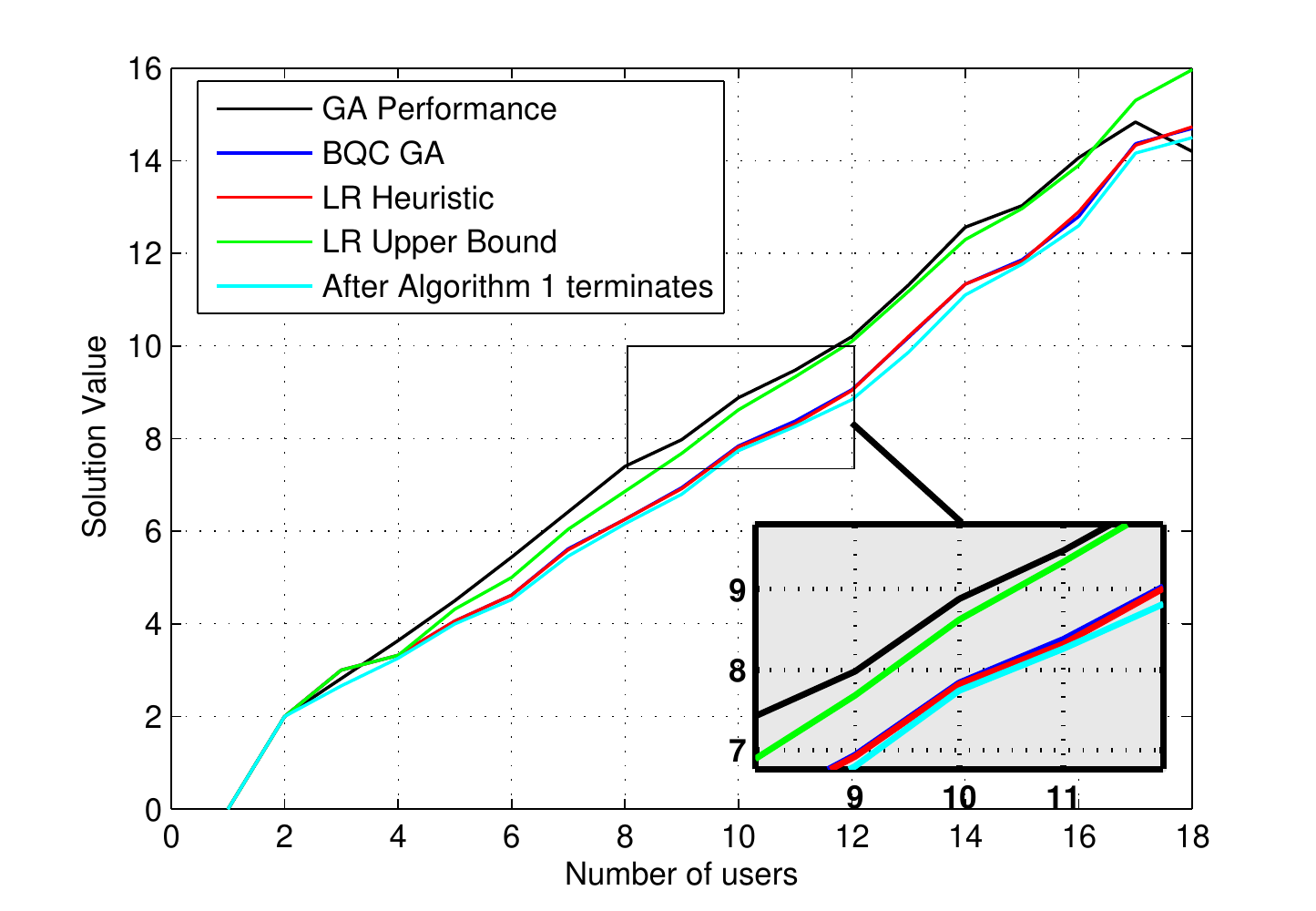}
\caption{Solution of the original problem (GA) compared to the solution value of the BQC problem (13)-(15) (GA, heuristic, upper bound and channel allocation) for the max sat problem.}
\label{fig:opt-value}
\end{figure}

\subsection{Simulation Setup}
To evaluate the issues mentioned above we simulate an environment of constant user density. Specifically, we have a density of 1/800 [users/m$^2$]. The rest of the user parameters are set so that the optimal value of the original problem is less than the number of users for most instances. We create the environment by positioning the users randomly in a square according to a uniform distribution in two dimensions. The distance from a user to its receiver is given by Gaussian distribution with mean value 10 and variance 5. The transmit power of each user is set to 1 W and the noise variance is set to $10^{-8}$ W. 

The SINR requirement of a specific user is chosen at random from a set of SINR targets, which is given in Table \ref{tab:sinr-targets}. We consider two sub problems, (i) maximizing number of satisfied users, in which case the revenue of each user is equal, and (ii) maximizing revenue, in which case satisfying larger SINR targets leads to a larger revenue.




To provide a reference value to the optimal value of the simplified problem compared to the original optimization problem, we solve (\ref{eq:org-prob})-(\ref{eq:a-0-1}) using a genetic algorithm (GA) from the MATLAB Global Optimization Toolbox. GAs are particularly suitable when the structure of the solution space of the problem is unknown and little theoretic analysis of the problem has been done. To benchmark the quality of the heuristic algorithm, we also solve the constraint transformed problems through the GA function in MATLAB. The population size of the GA was set to ten times the number of decision variables.
\begin{figure}
\centering
\includegraphics[width = 0.8\columnwidth]{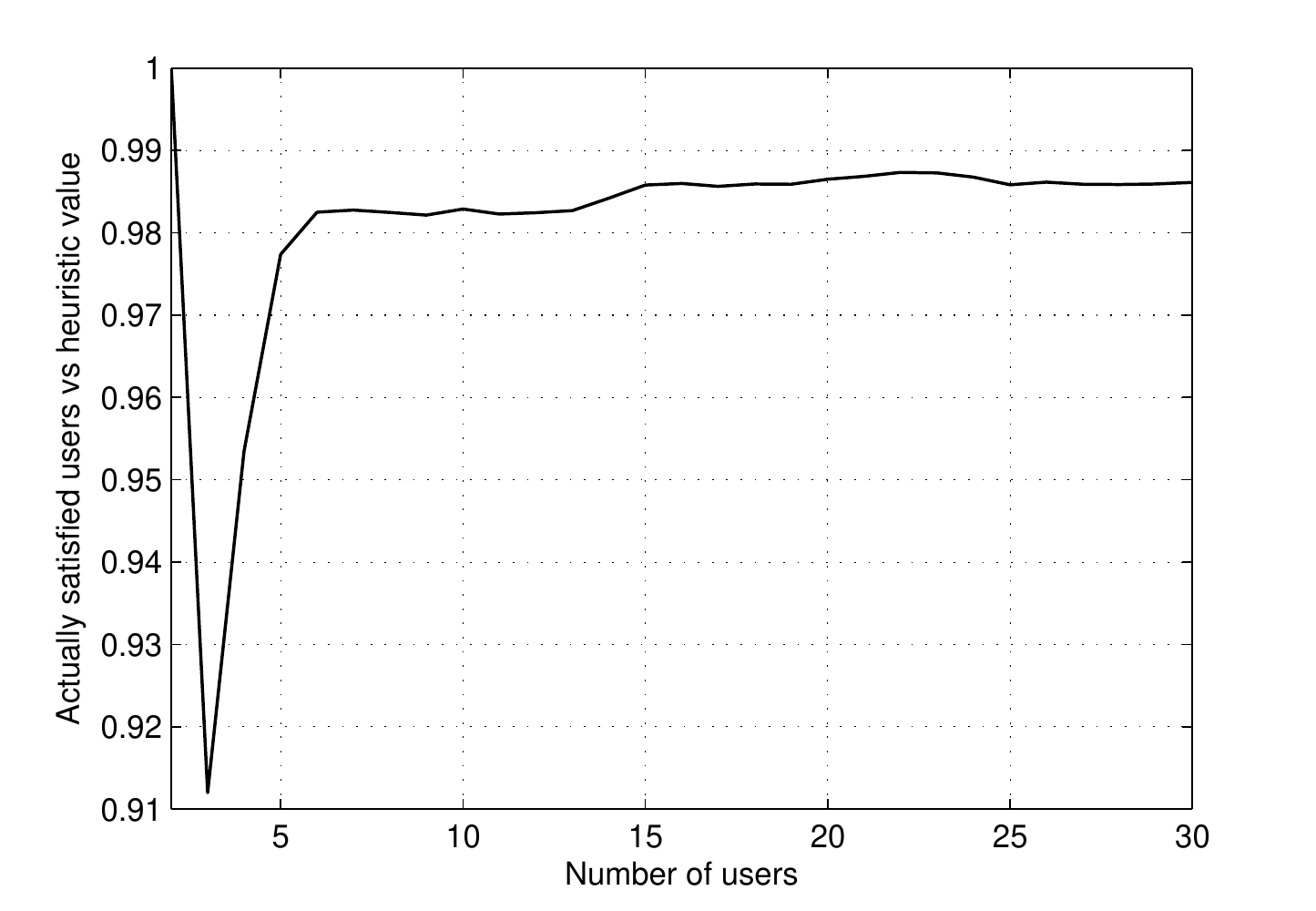}
\caption{Actually satisfied users after Algorithm 1 terminates vs solution value of the heuristic.}
\label{fig:sat-vs-z}
\end{figure}

\subsection{Simulation Results}
Fig. \ref{fig:opt-value-nona} shows the solution values for an environment with $5$ available channels where the goal is to maximize the number of satisfied users. Since all users have the same spectral resources, the constraint transformation in (\ref{eq:linearized-c}) will not hold and as can be seen from the results the solution values obtained from (\ref{eq:max-z})-(\ref{eq:x-0-1}) are not valid solutions. On the other hand, the solutions obtained from the constraint transformation in (\ref{eq:linearized-c-t}) are valid solutions, but we can see that the constraint in (\ref{eq:linearized-c-t}) is only a sufficient condition for a valid solution as the gap between the GA performance of the original problem in (\ref{eq:org-prob})-(\ref{eq:a-0-1}) and the performance of (\ref{eq:max-z-t})-(\ref{eq:x-0-1-t}) is roughly 2 for high numbers of users.

For the results in Fig \ref{fig:opt-value} - \ref{fig:d-vs-c-no} we let the spectral resources at each user vary such that its available channels is drawn as a random subset of $K_i$ channels from the total number of channels $K$. In the simulations $K_i$ is a uniformly distributed number between $2$ and $K$ and $K$ is set to 10. In Fig. \ref{fig:opt-value} the solution value of the different algorithms are plotted for the max sat problem. For 2 and 3 users, the BQCs optimal value can actually be greater than that of the GA because when the number of channels at any user is at least 2, the BQC problem is trivial for 2 users and will always set $x = 1$, as can be seen from (\ref{eq:linearized-c}). For large values of $N$ we see that the solutions obtained by BQC problem can surpass the GA solution of the original problem. A possible explanation for this is that with increased number of users the solution space for the original problem is so large that it becomes less likely that the GA will find the global optimum. With this in mind, the results indicate that the constraint transformation captures the performance of the system well.

In Fig. \ref{fig:sat-vs-z}, the difference between the optimal value of the heuristic and the number of users actually achieving their SINR requirements when Algorithm \ref{algo:1} terminates is given, normalized against the former. This further validates the constraint approximation done in this paper, as for number of users greater than 5 there is a less than 2\% difference between the optimal value of the heuristic and the number of users that actually achieve their SINR. Note that in the simulations here we have not assumed reciprocal channel gains (as was assumed in the proof of Proposition \ref{prop:2}), and some of the gap can be attributed to this fact.

\begin{figure}[t]
\centering
\includegraphics[width = 0.8\columnwidth]{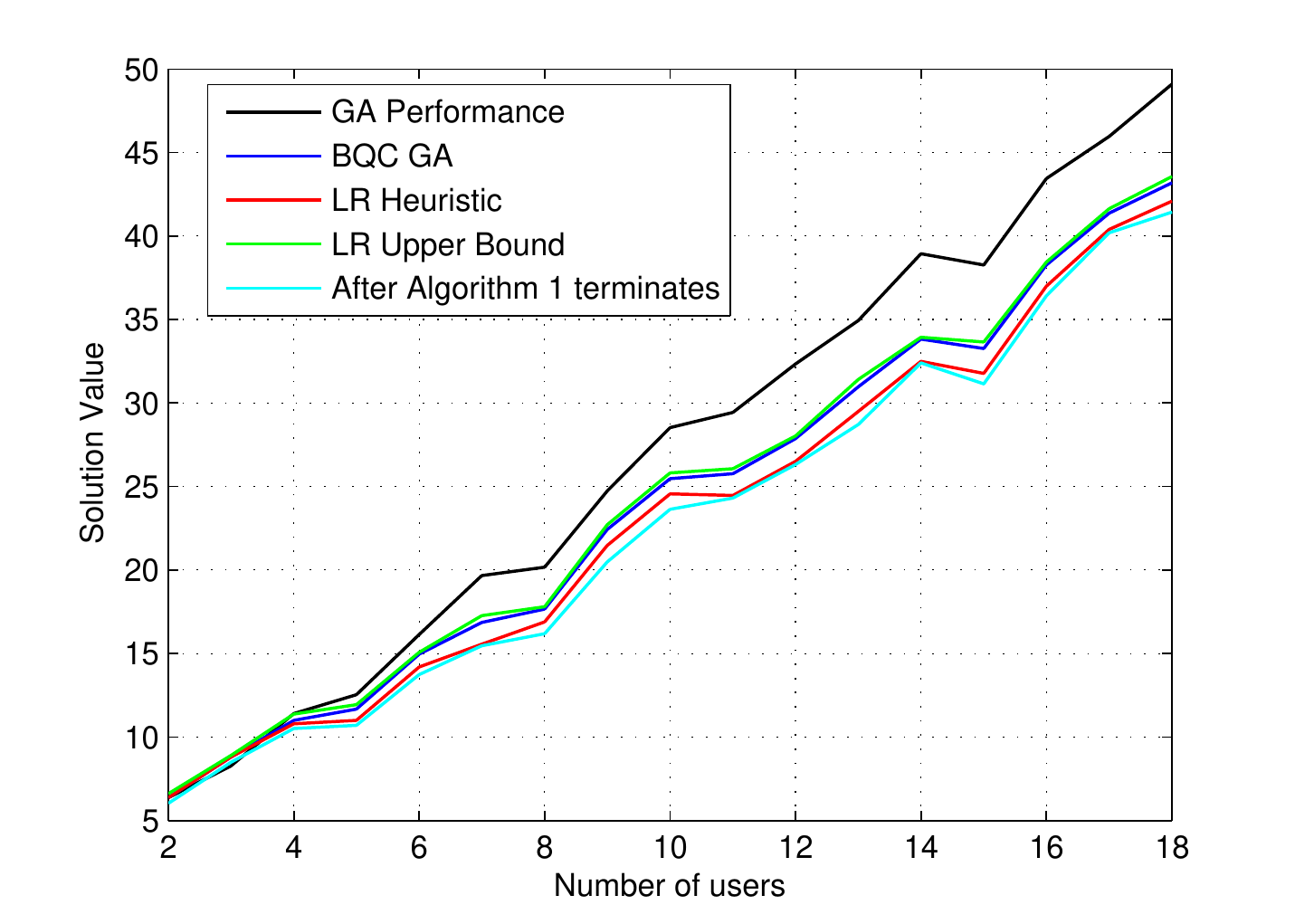}
\caption{Solution of the original problem (GA) compared to the solution value of the BQC problem (13)-(15) (GA, heuristic, upper bound and channel allocation) for the max revenue problem.}
\label{fig:opt-value-revenue}
\end{figure}

In the previous results the goal has been to maximize the number of satisfied users. In Fig. \ref{fig:opt-value-revenue} the goal is to maximize the revenue, where each users SINR requirement yields a corresponding profit as given in Table \ref{tab:sinr-targets}. Again, the constraint transformation performs within the same percentage of the GA solution of the original problem, between 10-15\% for 10-18 users.


\begin{figure}
\centering
\includegraphics[width = 0.8\columnwidth]{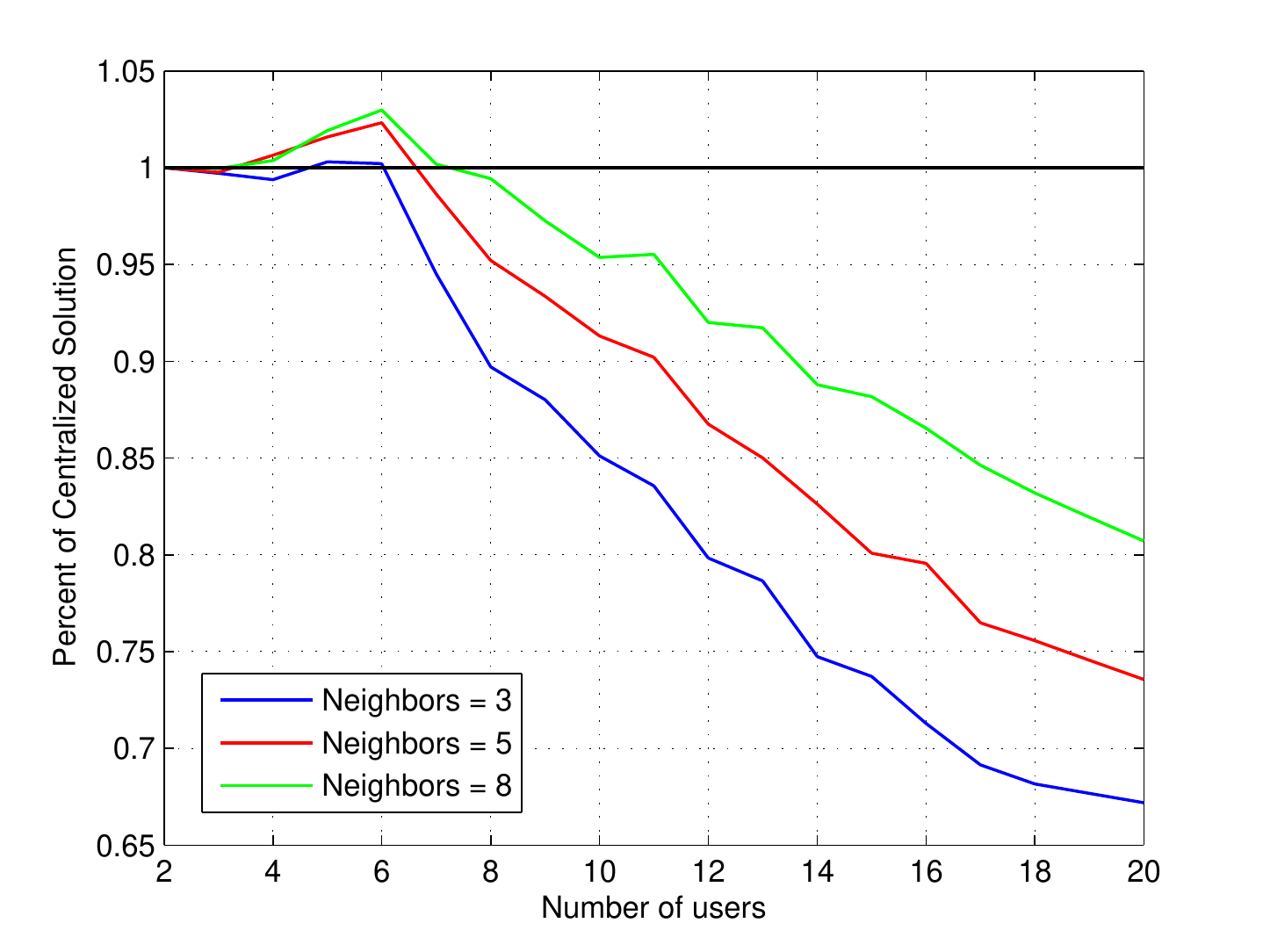}
\caption{	Distributed performance as percentage of centralized solution after Algorithm 1 terminates.}
\label{fig:d-vs-c-no}
\end{figure}


In Fig. \ref{fig:d-vs-c-no} we investigate the effect of taking only a subset of interfering users into account at each user. As our sufficient condition is based on bounding the size of different sets of interfering users, limiting the number of users that are contributing to the interference at each user will increase the number of users the heuristic algorithm allows to transmit. The question if how this affects the actual channel allocation based on the users that are allowed to transmit from constraint transformation. In Fig. \ref{fig:d-vs-c-no} we have plotted the number of users that actually achieve their target SINR after Algorithm \ref{algo:1} terminates for different neighbor degrees, as percentage of the number of users that achieve their SINR targets after Algorithm \ref{algo:1} terminates with global knowledge at the heuristic. Based on the user density we have calculated the distance within which $x$ neighboring users are expected to be found. Thus, for Neighbors = $x$, on average $x$ neighbors will be taken into account when solving the optimization problem. As can be seen from Fig. \ref{fig:d-vs-c-no}, for a low number of users limiting the number of neighbors taken into account actually increases the number of satisfied users even after Algorithm \ref{algo:1} terminates. Thus, we can assume that the sufficient condition is a bit restrictive. However, for increasing number of users limiting the number of neighbors that are taken into account severely degrades performance. This result limits the sufficient condition to be used in a distributed manner.


The result that capture the benefit of the heuristic algorithm the most is given in Table \ref{tab:time}, which shows the time taken to solve the problem for the different approaches. As the SINR requirements of the users change with time, the time it takes the algorithm to find a good solution is important. E.g. we see that for 12 users it takes the GA algorithm 30 minutes to solve the optimization problem. In this time it is likely that some of the users transmission parameters have changed, such as the SINR requirement and worst case signal strength, rendering the solution suboptimal.

Due to the time required to solve the original problem through a GA, simulation results are only provided for a maximum of 18 users. However, as our constraint transformation allows us to utilize a heuristic algorithm with a time complexity $\mathcal{O}(2N^2)$ this approach can be used to provide good estimates of the max sat problem and the max revenue problem for large-scale networks.

\begin{table}[t!]
\caption{Time to solve the problem}
\centering
\begin{tabular}{c c c}
\hline
Number of users & Algorithm & Time (s) \\
\hline
\hline
\multirow{3}{*}{2} & GA & 1.982 \\
 & BQC GA & 0.422 \\
 & Heuristic & 0.0001 \\
\hline
\multirow{3}{*}{6} & GA & 125.7 \\
 & BQC GA & 1.072 \\
 & MKP Heuristic & 0.0004 \\
\hline
\multirow{3}{*}{12} & GA & 1832 \\
 & BQC GA & 3.922 \\
 & MKP Heuristic & 0.0009 \\
\hline
\multirow{3}{*}{18} & GA & 16813 \\
 & BQC GA & 7.305 \\
 & MKP Heuristic & 0.0015 \\
\hline
\end{tabular}
\label{tab:time}
\end{table}

\section{Conclusion and Future Work}
\label{sec:conc}
In this paper we have presented a heuristic algorithm for solving the spectrum allocation problem under SINR requirements. Specifically, we investigated how many users that can transmit and achieve their SINR targets simultaneously and maximize the profit to a spectrum holder. As this problem is a non-convex integer problem, we transformed this problem into two binary quadratic constraint problems, one for which the solution is guaranteed to be a feasible solution to the original optimization problem and one which is a feasible solution asymptotically almost surely. To solve the BQC problems, we presented a heuristic algorithm based on Lagrange relaxation which bounds the solution value of the heuristic to the optimal value of the BQC problem. Through simulation results we showed that this approach yields solutions on average at a gap of 10\% from the solutions obtained by a genetic algorithm for the original non-linear problem.


\appendices
\section{Proof of Proposition \ref{prop:1}}
\label{app:proof-prop1}

We need to show that for each $x_i=1$ there exists a channel $k$ such that SINR$_i^k = \frac{S_i}{\sigma^2 + \sum_{j\neq i}a_j^k I_{j,i}}\geq \beta_i$.

Let $\omega_i^k = \sum_{j\neq i}a_j^k I_{j,i}$ represent the accumulated interference at user $i$ on channel $k$. From the definition of $I_i^{\max}$, it is clear that SINR$_i^k\geq \beta_i$ is equivalent to $\omega_i^k\leq I_i^{\max}$. We define a channel $k$ as blocked if $\omega_i^k > I_i^{\max}$. To prove the proposition we need to prove that if $x_i=1$ (i.e. user $i$ transmits) there exists at least one channel that is not blocked.

We bound the number of blocked channels as follows
\begin{align}
\Phi = &\{k|\omega_i^k>I_i^{\max}\} \nonumber \\
= &\{k|\exists j, a_j^k = 1 \text{ and } I_{j,i}>I_i^{\max}\}\cup \nonumber\\
&\{k|j:a_j^k = 1 \text{ and } I_{j,i}\leq I_i^{\max} \text{ and } \omega_i^k>I_i^{\max}\}\nonumber \\
= &\Phi_1 \cup \Phi_2
\end{align}
$\Phi_1$ contains the channels blocked by strong interferers, i.e. those that can block a band single handedly, whereas $\Phi_2$ contains those channels that are blocked by accumulative interference. We now bound the sizes of $\Phi_1$ and $\Phi_2$. We first bound $\Phi_1$.
\begin{align}
\Phi_1 &= \sum_{k=1}^{K_i}\sum_{j\neq i, I_{j,i}>I_i^{\max}} a_j^k \leq \sum_{j\neq i, I_{j,i}>I_i^{\max}}x_j
\end{align}
To bound $\Phi_2$ we use the fact that:
\begin{align}
&\sum_{k\in \Phi_2}\omega_i^k = \sum_{k\in\Phi_2}\sum_{j\neq i} a_j^kI_{j,i} = \sum_{j\neq i, I_{j,i}\leq I_i^{\max}}I_{j,i}\sum_{k\in\Phi_2}a_j^k \nonumber \\
&\leq \sum_{j\neq i, I_{j,i}\leq I_i^{\max}}I_{j,i}x_j
\end{align}
Since for any channel $k\in\Phi_2$ $\omega_i^k>I_i^{\max}$ we have that
\begin{equation}
\sum_{k\in\Phi_2}I_i^{\max}<\sum_{k\in\Phi_2} \omega_i^k \leq \sum_{j\neq i, I_{j,i}\leq I_i^{\max}}I_{j,i}x_j
\end{equation}
Since $I_i^{\max}$ is fixed we have that
\begin{equation}
|\Phi_2|<\frac{1}{I_i^{\max}}\sum_{j\neq i, I_{j,i}\leq I_i^{\max}}I_{j,i}x_j
\end{equation}
And thus that the number of blocked channels is at most
\begin{align}
|\Phi| &= |\Phi_1|+|\Phi_2|<\sum_{j\neq i, I_{j,i}>I_i^{\max}}x_j\nonumber\\
& + \sum_{j\neq i, I_{j,i}\leq I_i^{\max}}\frac{I_{j,i}}{I_i^{\max}}x_j \nonumber \\
&= \sum_{j\neq i} x_j\frac{I^{+}_{j,i}}{I_i^{\max}}
\end{align}
And thus if $\sum_{j\neq i} x_j\frac{I^{+}_{j,i}}{I_i^{\max}}\leq K_i-1$ there exists at least one channel on which user $i$ can achieve its SINR requirement.\qed

\section{Proof of Proposition \ref{prop:a}}
\label{app:proof-propA}

Consider the sets $\Phi_1$ and $\Phi_2$ defined in Appendix \ref{app:proof-prop1}. We now bound them again. We bound $\Phi_1$ as
\begin{align}
\Phi_1 &= \sum_{k=1}^{K_i}\sum_{j\neq i, I_{j,i}>I_i^{\max}} a_j^k \leq \sum_{j\neq i, I_{j,i}>I_i^{\max}}x_j\text{Prob}(j,i)
\end{align}
where $\text{Prob}(j,i)$ is the probability that $j$ will contribute to blocking on any of $i$'s channels. We note that the same can be done for $\Phi_2$ so that
\begin{align}
&\sum_{k\in \Phi_2}\omega_i^k \leq \sum_{j\neq i, I_{j,i}\leq I_i^{\max}}I_{j,i}x_j\text{Prob}(j,i)
\end{align}
and
\begin{equation}
\sum_{k\in\Phi_2}I_i^{\max}<\sum_{k\in\Phi_2} \omega_i^k \leq \sum_{j\neq i, I_{j,i}\leq I_i^{\max}}I_{j,i}x_j\text{Prob}(j,i)
\end{equation}
so that
\begin{equation}
|\Phi_2|<\frac{1}{I_i^{\max}}\sum_{j\neq i, I_{j,i}\leq I_i^{\max}}I_{j,i}x_j\text{Prob}(j,i)
\end{equation}
The total number of blocked channels can now be given as
\begin{align}
|\Phi| = \sum_{j\neq i} x_j\frac{I^{+}_{j,i}}{I_i^{\max}}\text{Prob}(j,i) \sim \sum_{j\neq i} x_j\frac{I^{+}_{j,i}}{I_i^{\max}}\frac{|\mathcal{K}_j\cap\mathcal{K}_i|}{K_j}
\end{align}
Clearly as $N\rightarrow \infty$
\begin{equation}
\sum_{j\neq i} x_j\frac{I^{+}_{j,i}}{I_i^{\max}}\text{Prob}(j,i)-\sum_{j\neq i} x_j\frac{I^{+}_{j,i}}{I_i^{\max}}\frac{|\mathcal{K}_j\cap\mathcal{K}_i|}{K_j}\rightarrow 0
\end{equation}
due to the central limit theorem. Thus it follows that if $\sum_{j\neq i} x_j\frac{I^{+}_{j,i}|\mathcal{K}_j\cap\mathcal{K}_i|}{I_i^{\max}K_j}\leq K_i-1$, there exists at least one channel on which user $i$ can achieve its SINR requirement almost surely.\qed

\section{Proof of Proposition \ref{prop:2}}
\label{app:proof-prop2}
We prove that a greedy algorithm such as the one given in Algorithm \ref{algo:1} will converge under the assumption of reciprocal
channel gains (i.e. $g_{ij} = g_{ji}$ is the pathloss between user $i$ and $j$).

Consider a utility function $u_i^{k}$ for each user $i$ that depends on $k$ as follows
\begin{equation}
u_i^{k} = \sum_{j\neq i}a_i^k a_j^k g_{ji}P_jP_i.
\end{equation}
Clearly minimizing $\omega_i^k$ in Algorithm \ref{algo:1} is equivalent to minimizing $u_i^k$. Consider a potential function $P(\mathbb{A})$ defined for an allocation $\mathbb{A}$ as
\begin{equation}
P(\mathbb{A}) = \sum_{i=1}^N\sum_{k=1}^{K_i} u_i^k
\end{equation}
We prove convergence by showing that the game forms a generalized ordinal potential game. Convergence is then guaranteed as all finite generalized ordinal potential games have a pure strategy equilibrium \cite[Corollary 2.2]{Monderer1996} and all finite generalized ordinal potential games have a finite improvement path \cite[Lemma 2.3]{Monderer1996}.

Now, user $i$ will change its allocation from channel $k$ to $k'$ given that
\begin{equation}
\omega_i^{k'} = \sum_{j\neq i}a_{j}^{k'}g_{ji}P_j<\sum_{j\neq i}a_{j}^{k}g_{ji}P_j = \omega_i^{k}
\end{equation}
By multiplying each side by $P_i$ we have
\begin{equation}
u_i^{k'} =\sum_{j\neq i}a_{j}^{k'}g_{ji}P_jP_i<\sum_{j\neq i}a_{j}^{k}g_{ji}P_jP_i = u_{i}^k
\end{equation}
Assume before user $i$ changed from $k$ to $k'$ that user $l$ transmits on $k$ and user $m$ transmits on $k'$. Since user $i$ no longer transmits on $k$, clearly $\omega_l^k \text{ (after)}<\omega_l^k \text{ (before)}$. 
Define $\Delta u_i$, $\Delta u_l$ and $\Delta u_m$ as
\begin{equation*}
\Delta u_i = u_{i}^{k} - u_{i}^{k'} = \sum_{j\neq i}a_{j}^{k}g_{ji}P_jP_i-\sum_{j\neq i}a_{j}^{k'}g_{ji}P_jP_i>0,
\end{equation*}
\begin{equation*}
\Delta u_l = u_l(\text{before $i$ changed}) - u_l(\text{after $i$ changed}) = a_{l}^{k}g_{il}P_lP_i
\end{equation*}
and
\begin{align}
\Delta u_m &= u_m(\text{before $i$ changed}) - u_m(\text{after $i$ changed}) \nonumber \\
&= -a_{m}^{k'}g_{im}P_mP_i \nonumber
\end{align}
$\Delta u_l$ corresponds to the users that have gained by user $i$'s change from channel $k$ to $k'$ in terms of increased SINR. $\Delta u_m$ corresponds to the users that have lost by user $i$'s change from channel $k$ to $k'$ in terms of decreased SINR. 
The total change over all affected users is thus
\begin{align}
&\Delta P(\mathbb{A},\mathbb{A'}) = \Delta u_i + \sum_{l\neq i} \Delta u_l + \sum_{m\neq i} \Delta u_m \nonumber \\
&=\sum_{j\neq i}a_{j}^{k}g_{ji}P_jP_i-\sum_{j\neq i}a_{j}^{k'}g_{ji}P_jP_i + \sum_{j\neq i}a_{j}^{k}g_{ji}P_jP_i \nonumber \\
&-\sum_{j\neq i}a_{j}^{k'}g_{ji}P_jP_i \nonumber \\
&= 2\biggl(\sum_{j\neq i}a_{j}^{k}g_{ji}P_jP_i-\sum_{j\neq i}a_{j}^{k'}g_{ji}P_jP_i\biggr)>0
\end{align}
Thus, thus if $u_i^k>u_i^{k'}$ $\Rightarrow$ $P(\mathbb{A})>P(\mathbb{A'})$ and this is thus a generalized ordinal potential game.\qed

\section{Proof of Proposition \ref{prop:G}}
\label{app:proof-propG}

%
%
%

To prove the proposition we first introduce a necessary condition for a successful spectrum allocation:
\begin{lemma}
Let $\mathbf{x}$ be a successful spectrum allocation. Then for each $x_i$ the following holds
\begin{equation}
x_i\bigl(1+\sum_{j\neq i} x_j \frac{I_{j,i}^{+}}{I^{\max}}\bigr) \leq \min(2^{\alpha}+1,10)K_i
\label{eq:nec}
\end{equation}
\end{lemma}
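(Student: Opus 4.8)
The plan is to prove the necessary condition channel-by-channel. Fix a user $i$ with $x_i = 1$ (if $x_i = 0$ the left-hand side vanishes and the bound is trivial). Because each user occupies a single channel, the active interferers $\{j \neq i : x_j = 1\}$ are partitioned according to the channel they transmit on; write $A_k$ for the set of active users on channel $k$. Since $\mathbf{x}$ is successful, the users in each $A_k$ all meet their SINR target on channel $k$, so $A_k$ is an SINR-feasible set. The quantity to bound, $\sum_{j\neq i} x_j I^{+}_{j,i}/I^{\max}$, then decomposes as $\sum_k \sum_{j \in A_k,\, j\neq i} I^{+}_{j,i}/I^{\max}$, so it suffices to bound the per-channel affectance $\Psi_k(i) := \sum_{j\in A_k,\, j\neq i} I^{+}_{j,i}/I^{\max}$ by the constant $C := \min(2^{\alpha}+1, 10)$ and then sum over the $K_i$ channels.

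First I would establish $\Psi_k(i) \leq 2^{\alpha}+1$ by a closest-interferer argument. Let $j^{*}$ be the active user on channel $k$ (other than $i$) nearest to $i$. For any other $j \in A_k$, the triangle inequality for received powers (the footnote's assumption) together with $d(i,j^{*}) \leq d(i,j)$ gives $d(j,j^{*}) \leq d(j,i) + d(i,j^{*}) \leq 2 d(j,i)$, hence, since received power decays as $d^{-\alpha}$ under uniform transmit power, $I_{j,i} \leq 2^{\alpha} I_{j,j^{*}}$. Summing over $j \in A_k \setminus \{i,j^{*}\}$ and using SINR-feasibility at $j^{*}$ (its accumulated interference $\omega_{j^{*}}^{k}$ is at most $I^{\max}$) yields $\sum_{j\neq i,j^{*}} I_{j,i} \leq 2^{\alpha} I^{\max}$, so these terms contribute at most $2^{\alpha}$ to $\Psi_k(i)$; the single term $j^{*}$ contributes $I^{+}_{j^{*},i}/I^{\max} \leq 1$ by the definition of $I^{+}$. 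This gives $\Psi_k(i) \leq 2^{\alpha}+1$.

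Next I would establish the $\alpha$-independent bound $\Psi_k(i) \leq 10$, which dominates for large pathloss exponents. Here I would use that SINR-feasibility forces the active senders on channel $k$ to be pairwise separated by at least the distance $d^{*}$ at which a single interferer's power equals $I^{\max}$ (otherwise two of them would exceed each other's tolerable interference). Partitioning the plane into annuli of width $d^{*}$ about $i$, a packing bound limits the number of $d^{*}$-separated senders in each annulus, while each sender at distance $r$ contributes at most $\min\{1,(d^{*}/r)^{\alpha}\}$ to $\Psi_k(i)$. Summing the product of the packing count and the weight over all annuli gives a series of order $\sum_m m^{1-\alpha}$, which converges for $\alpha > 2$; in the regime $10 \leq 2^{\alpha}+1$ its value can be bounded by $10$. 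Taking the smaller of the two estimates gives $\Psi_k(i) \leq C$ for every channel. I expect this packing-plus-series estimate certifying the constant $10$ to be the main obstacle, since the $2^{\alpha}+1$ argument is elementary whereas pinning down a clean universal constant requires careful annulus counting.

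Finally I would sum over channels and absorb the leading ``$1$.'' On user $i$'s own channel $k_i$, SINR-feasibility at $i$ itself gives the stronger bound $\Psi_{k_i}(i) \leq 1$, while each of the remaining $K_i - 1$ channels contributes at most $C$. Hence $\sum_{j\neq i} x_j I^{+}_{j,i}/I^{\max} \leq 1 + (K_i-1)C$, and therefore $1 + \sum_{j\neq i} x_j I^{+}_{j,i}/I^{\max} \leq 2 + (K_i-1)C = C K_i - (C-2) \leq C K_i$, using $C \geq 2$. Multiplying by $x_i \in \{0,1\}$ recovers $x_i\bigl(1 + \sum_{j\neq i} x_j I^{+}_{j,i}/I^{\max}\bigr) \leq \min(2^{\alpha}+1,10) K_i$, as claimed.
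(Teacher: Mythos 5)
Your overall architecture is sound, and one branch of the minimum you do prove completely: the channel-by-channel decomposition into per-channel affectances $\Psi_k(i)$, the nearest-interferer/triangle-inequality argument giving $\Psi_k(i)\le 2^{\alpha}+1$ (which, as a bonus, never needs uniform transmit power, since $I_{j,i}\le 2^{\alpha}I_{j,j^{*}}$ compares the \emph{same} transmitter at two receivers), the own-channel refinement $\Psi_{k_i}(i)\le 1$ from user $i$'s own feasibility, and the final absorption $2+(K_i-1)C\le CK_i$ for $C\ge 2$ are all correct, granted the common $I^{\max}$ and equal channel sets that the surrounding section assumes (the decomposition into exactly $K_i$ channels silently requires every active interferer to live in $\mathcal{K}_i$, which holds only in the equal-channel-set setting of (\ref{eq:max-z-t})--(\ref{eq:x-0-1-t})). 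For reference, the paper itself supplies no proof at all --- it only points to Lemma 3 of \cite{optimus} --- so your reconstruction of the $2^{\alpha}+1$ branch is a legitimate standalone argument.

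The genuine gap is exactly where you yourself flagged it: the constant $10$ is not certified, and pairwise $d^{*}$-separation plus annulus packing cannot certify it. First, a single $d^{*}$ exists only under uniform transmit powers, an assumption you use silently. Second, and fatally, interferers inside the disk of radius $d^{*}$ around $i$ contribute the capped value $1$ each (not $(d^{*}/r)^{\alpha}$), and pairwise $d^{*}$-separation alone admits up to seven such points (a center plus a regular hexagon at radius $d^{*}$), consuming $7$ of the budget of $10$ before any annulus is counted. Adding the outer series with rigorously provable (area-based) packing counts of order $16m+8$ per annulus, the total at the crossover exponent $\alpha=\log_2 9\approx 3.17$ --- the smallest $\alpha$ for which the $10$ branch is actually needed --- comes out in the range $30$--$40$, and even with the geometrically tight count of roughly $2\pi m$ points per ring it still exceeds $10$ for $\alpha$ slightly above $\log_2 9$. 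What actually rules out such configurations is not pairwise separation but the \emph{cumulative} feasibility of the interfering set itself: each active user on channel $k$ tolerates total interference at most $I^{\max}$, so $n$ mutually close senders would each suffer on the order of $(n-1)I^{\max}$, forcing near clusters to be far sparser than packing permits. Any correct proof of the $10$ branch must exploit this accumulated constraint on the interferers (as Lemma 3 of \cite{optimus} presumably does); your sketch never uses it, so as written the lemma is established only in the regime where $2^{\alpha}+1$ is the minimum.
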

The proof is similar to the proof of Lemma 3 in \cite{optimus}.

By comparing the necessary condition of (\ref{eq:nec}) with the sufficient condition in (\ref{eq:linearized-c-t}) we can bound the optimal value. Let $\frac{I_{i,j}^{+}}{I^{\max}}$ be denoted as $a_{j,i}$ and $\min(2^{\alpha}+1,10)$ as $C$. We now define two sets:
\begin{align}
&\mathcal{L}_s = \{i |x_i = 1, x_i(1+\sum_{j\in \mathcal{L}_s,j\neq i}x_ja_{j,i})\leq K_i\} \\
&\mathcal{L}_n = \{i |x_i = 1, x_i(1+\sum_{j\in \mathcal{L}_n,j\neq i}x_ja_{j,i})\leq CK_i\}
\end{align}
Thus, $|\mathcal{L}_s|$ is the number of users satisfying the sufficient constraints and $|\mathcal{L}_n|$ is the number of users satisfying the necessary constraints. Clearly the optimal value (OPT) of (\ref{eq:org-prob})-(\ref{eq:a-0-1}) is bounded as $|\mathcal{L}_s|\leq OPT\leq |\mathcal{L}_n|$.

Let $\epsilon_i$ be smallest $a_{i,j}$ in the sum over $\mathcal{L}_n$, i.e. $\epsilon_i = \min_{j\in \mathcal{L}_n}a_{j,i}$. Then we have
%
%

\begin{align}
&|\mathcal{L}_n|\epsilon_i\leq 1+ \sum_{j\in \mathcal{L}_n,j\neq i}x_ja_{j,i} \leq CK_i \nonumber \\
\Rightarrow & \epsilon_i\leq \frac{CK_i}{|\mathcal{L}_n|} \nonumber 
\end{align}
which maximizes $|\mathcal{L}_n|$. Consider
\begin{align}
&1+ \sum_{j\in \mathcal{L}_n,j\neq i}x_ja_{j,i}\nonumber \\
&= 1+\sum_{j\in \mathcal{L}_s,j\neq i}x_ja_{j,i}+\sum_{j\in \mathcal{L}_n\backslash\mathcal{L}_s,j\neq i}x_ja_{j,i} \nonumber \\
&\geq 1+\sum_{j\in \mathcal{L}_s,j\neq i}x_ja_{j,i}+(|\mathcal{L}_n\backslash\mathcal{L}_s|-1)\epsilon_i
\end{align}
Thus we have
\begin{align}
&1+\sum_{j\in \mathcal{L}_s,j\neq i}x_ja_{j,i}+(|\mathcal{L}_n\backslash\mathcal{L}_s|-1)\epsilon_i \nonumber \\
&\leq K_i + (|\mathcal{L}_n\backslash\mathcal{L}_s|-1)\epsilon_i \leq CK_i
\end{align}
Summation over all $i$ yields
\begin{align}
C\sum_iK_i &\geq \sum_i (K_i+(|\mathcal{L}_n\backslash\mathcal{L}_s|-1)\epsilon_i)\nonumber \\
&\geq \sum_i K_i+(|\mathcal{L}_n\backslash\mathcal{L}_s|-1)\sum_i\epsilon_i \nonumber \\
&\geq \sum_i K_i+(|\mathcal{L}_n\backslash\mathcal{L}_s|-1)\frac{C}{|\mathcal{L}_n|}\sum_iK_i \nonumber \\
\Rightarrow C&\geq 1 + (|\mathcal{L}_n\backslash\mathcal{L}_s|-1)\frac{C}{|\mathcal{L}_n|}
\end{align}

Since $\mathcal{L}_s$ is a subset of $\mathcal{L}_n$, $|\mathcal{L}_n\backslash\mathcal{L}_s| = |\mathcal{L}_n|-|\mathcal{L}_s|$, and we get
\begin{equation}
|\mathcal{L}_n|\leq C(|\mathcal{L}_s|+1)
\end{equation}\qed


\bibliographystyle{ieeetr}
\bibliography{../../../bib,../../../library}{}
\end{document}